\newtheorem{theorem}{Theorem}
\newtheorem{lemma}{Lemma}
\newcommand{\blackcircle}[1]{%
\tikz[baseline=(char.base)]{
\node[shape=circle,draw,inner sep=0pt,fill=black, text=white, minimum size=1.1em] (char) {#1};}}
\newcommand{\pdag}{{$p$-DAG}\xspace}
\newcommand{\eg}{{\it e.g.}\xspace}
\newcommand{\ie}{{\it i.e.}\xspace}
\newcommand{\metric}{NOAR\xspace}
\begin{document}

\title{Exploiting the Uncertainty of the Longest Paths:\\Response Time Analysis for Probabilistic DAG Tasks}



\author{
\IEEEauthorblockN{Yiyang Gao\IEEEauthorrefmark{2},Shuai Zhao\IEEEauthorrefmark{2}, Boyang Li\IEEEauthorrefmark{2}, Xinwei Fang\IEEEauthorrefmark{3}, Zhiyang Lin\IEEEauthorrefmark{2}, Zhe Jiang\IEEEauthorrefmark{4},Nan Guan\IEEEauthorrefmark{6}}
\IEEEauthorblockA{
\IEEEauthorrefmark{2}Sun Yat-sen University, China
\IEEEauthorrefmark{3}University of York, UK
\IEEEauthorrefmark{4}Southeast University, China
\IEEEauthorrefmark{6}City University of Hong Kong, Hong Kong SAR
}
}

\maketitle

\begin{abstract}
Parallel real-time systems (\textit{\textbf{e.g.,}} autonomous driving systems) often contain functionalities with complex dependencies and execution uncertainties, leading to significant timing variability which can be represented as a probabilistic distribution. 
However, existing timing analysis either produces a single conservative bound or suffers from severe scalability issues due to the exhaustive enumeration of every execution scenario. This causes significant difficulties in leveraging the probabilistic timing behaviours, resulting in sub-optimal design solutions. 
Modelling the system as a probabilistic directed acyclic graph ($\bm{p}$-DAG), 
this paper presents a probabilistic response time analysis based on the longest paths of the $\bm{p}$-DAG across all execution scenarios, enhancing the capability of the analysis by eliminating the need for enumeration. 
We first identify every longest path based on the structure of $\bm{p}$-DAG and compute the probability of its occurrence.
Then, the worst-case interfering workload is computed for each longest path, forming a complete probabilistic response time distribution with correctness guarantees.
Experiments show that compared to the enumeration-based approach, the proposed analysis effectively scales to large $\bm{p}$-DAGs with computation cost reduced by six orders of magnitude while maintaining a low deviation (1.04\% on average and below 5\% for most $\bm{p}$-DAGs), empowering system design solutions with improved resource efficiency.
\end{abstract}

\IEEEpeerreviewmaketitle

\section{Introduction}
\label{sec:intro}



The parallel tasks in multicore real-time systems (\eg, automotive, avionics and robotics) often contain complex dependencies~\cite{zhao2020dag,zhao2023cache,he2019intra,he2021response}, and more importantly, execution uncertainties~\cite{baruah2021feasibility,ueter2021response,melani2015response}, \eg, the ``if-else" statements that execute different branches under varied conditions.
Such execution uncertainties widely exist both within the execution of a single task and between parallel tasks.
This leads to significant variability in the timing behaviours of the system, which can be represented as a probabilistic timing distribution~\cite{nelis2016variability,cazorla2013proartis}. 

Most design and verification methods consider task dependencies by modelling the system as a directed acyclic graph (DAG)~\cite{graham1969bounds,zhao2020dag,he2019intra,he2021response,zhao2022dag}. 
However, these methods often apply a single conservative timing bound (\ie, the worst-case response time) regardless of the execution uncertainties within the DAG~\cite{zhao2020dag,he2019intra}.  
As systems become ever-complex and non-deterministic, such methods are less effective due to insufficient and overly pessimistic analytical results.
For instance, the automotive standard ISO-26262 defines a failure rate for each automotive safety integrity level~\cite{yi2024can,iso201126262,birch2013safety,palin2011iso}, demanding a probabilistic timing analysis that empowers more informative decision-making and design solutions for automotive systems.

Numerous studies have been conducted on the probabilistic worst-case execution time (WCET) of a single thread~\cite{davis2019survey,abella2014comparison,bernat2002wcet,bozhko2023really}. 
However, limited results are reported that address DAG tasks with probabilistic executions (\pdag{})~\cite{ueter2021response}, which are commonly found in real-world applications such as autonomous driving systems~\cite{houssam2020hpc,zhao2024risk,maier2023causal}.
Fig.~\ref{figs:CDAG_example} presents an example \pdag{} with two probabilistic structures, each containing branches with different execution probabilities. In each release, only one branch of every probabilistic structure can execute, yielding different DAG structures with varied response times.

For a \pdag{} modelled system, the existing method~\cite{ueter2021response} produces its probabilistic response time distribution by enumerating through all execution scenarios, with Graham's bound~\cite{graham1969bounds} applied to analyse the traditional DAG of each scenario.
Fig.~\ref{fig:1b} illustrates the cumulative probability distribution of the response times for the example \pdag{}.
However, this approach suffers from severe scalability issues due to the need for exhaustive recursions of every execution scenario, which fails to provide any results for large and complex \pdag{}s. This significantly limits its applicability, imposing huge challenges for the design and verification of systems with \pdag{}s.

\begin{figure}[t]
\centering
\subfigbottomskip=1pt 
\subfigure[The structure of a \pdag.]{\label{figs:CDAG_example} 
\includegraphics[width=.54\columnwidth]{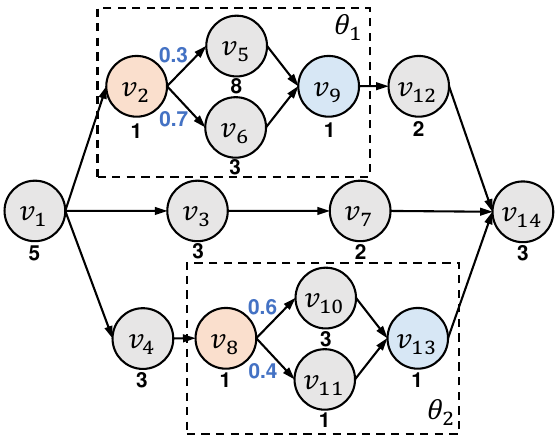}}
\subfigure[Response time distribution.]{\label{fig:1b} 
\includegraphics[width=.42\columnwidth]{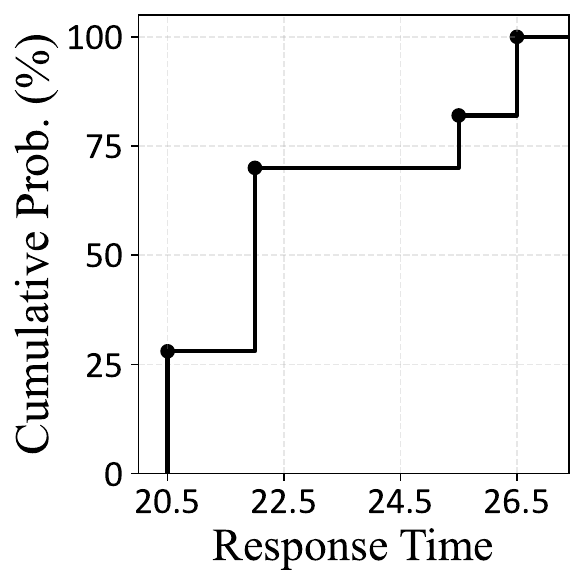}}
\caption{A \pdag with two probabilistic structures and its cumulative probability distribution in response time \textit{(numbers in black: worst-case execution time; numbers in blue: execution probability)}.}
\vspace{-5pt}
\label{figs:exp_dis}
\end{figure}

\textbf{Contributions.} 
This paper presents a probabilistic response time analysis of a \pdag{}, which eliminates the need for enumeration by leveraging the longest paths across all the execution scenarios.
To achieve this, (i) we first identify the set of longest paths of the \pdag{} by determining a lower bound on the longest paths of the \pdag{} and its sub-structures. 
(ii) For each identified path, an analysis is constructed to compute the probability where the path is executed and is the longest path.
(iii) Finally, the worst-case interfering workload associated with each longest path is computed, forming a complete probabilistic response time distribution of the \pdag{} with correctness guarantees.
(iv) Experiments show that the proposed analysis effectively addresses the scalability issue, enhancing its capability in analysing large and complex \pdag{}s.
Compared to the enumeration-based approach, the proposed analysis reduces the computation cost by six orders of magnitude while maintaining a deviation of only 1.04\% on average (below 5\% for most \pdag{}s). 
More importantly, we demonstrate that given a specific time limit, the proposed analysis effectively empowers design solutions with improved resource efficiency, including systems with large \pdag{}s.

To the best of our knowledge, this is the first probabilistic response time analysis for \pdag{}s. 
Notably, the analysis can be used in combination with a probabilistic WCET analysis (\eg, the one in~\cite{ziccardi2015epc}), in which a node with a probabilistic WCET can be effectively modelled as a probabilistic structure with the associated WCETs and probabilities.
With the \pdag{} analysis, the probabilistic timing behaviours of parallel tasks can be fully leveraged to produce flexible and optimised design solutions (\eg, through feedback-based online decision-making) that can meet specific timing requirements. 
\section{Task Model and Preliminaries}

This work focuses on the analysis of a periodic \pdag{} running on $m$ symmetric cores.
Below we provide the task model of a \pdag{}, the existing response time analysis, and the motivation of this work.

\subsection{Task Model of a \pdag{}}
\label{sec:task}
As with the traditional DAG model~\cite{zhao2020dag,he2019intra}, a \pdag{} task is defined by $\tau=\{ \mathcal{V}, \mathcal{E}, T, D \}$, where $\mathcal{V}$ represents a set of nodes, $\mathcal{E}$ denotes a set of directed edges, $T$ is the period and $D$ is the deadline. 
A node $v_i \in \mathcal{V}$ indicates a series of computations that must be executed sequentially. The worst-case execution time (WCET) of $v_i$ is defined as $C_i$\footnote{Nodes with probabilistic WCETs can be supported in this work, in which each of such nodes is modelled as a probabilistic structure in the \pdag{}.}.
An edge $e_{i,j} \in \mathcal{E}$ indicates the execution dependency between $v_i$ and $v_j$, \ie, $v_j$ can start only after the completion of $v_i$. 
As with~\cite{zhao2020dag,he2023real}, we assume that $\tau$ has one source node $v_{src}$ and one sink node $v_{snk}$, \ie, $e_{i,src}$ and $e_{snk,i}$ do not exist.
A path $\lambda_{h}=\{ v_{src},..., v_{snk} \}$ is a sequence of nodes in which every two consecutive nodes are connected by an edge.
The set of all paths in the \pdag{} $\tau$ is denoted as $\Lambda$.

In addition, a \pdag{} contains $n$ probabilistic structures $\Theta = \{\theta_1, \theta_2,..., \theta_n\}$.
For a probabilistic structure $\theta_x \in \Theta$, it has an entry node $v^{src}_x$, an exit node $v^{snk}_x$, and a set of probabilistic branches $\{\theta_x^1, ... , \theta_x^k\}$ in between. 
A branch $\theta_x^k \in \theta_x$ is a sub-graph consisting of non-conditional nodes only, \ie, they are either executed unconditionally or not being executed at all. 
For $\theta_1$ of the \pdag{} in Fig.~\ref{figs:CDAG_example}, it starts from $v_2$ and ends at $v_9$, with two branches $\theta^1_1=\{v_5\}$ and $\theta^2_1=\{v_6\}$.
Function $F(\theta^k_x)$ provides the execution probability of the $k\textsuperscript{th}$ branch in $\theta_x$, which can be obtained based on measurements and the analysing methods in~\cite{yi2024can,bozhko2023really,abella2014comparison}. For $\theta_x$, it follows $\sum_{\forall \theta_x^k \in \theta_x} F(\theta^k_x) = 1$, \eg, $F(\theta^1_1)=0.3$ and $F(\theta^2_1)=0.7$ in Fig.~\ref{figs:CDAG_example}.
Notation $|\theta_x|$ gives the number of branches in $\theta_x$.

Depending on the branch $\theta_x^k$ being executed in each $\theta_x$, $\tau$ can release a series of jobs with different non-conditional graphs. 
For instance, the \pdag{} in Fig.~\ref{figs:CDAG_example} can yield jobs with four different graphs.
Notation $\mathcal{G}=\{\mathcal{G}_1, \mathcal{G}_2, ... \}$ denotes the set of unique non-conditional graphs that can be released by $\tau$.
For a $\mathcal{G}_u$, $len(\mathcal{G}_u)$ is the length of its longest path, $vol(\mathcal{G}_u)$ is the workload, 
$H(\mathcal{G}_u)$ provides the set of branches being executed in $\mathcal{G}_u$, and $S(\mathcal{G}_u)$ provides the set of probabilistic structures of $H(\mathcal{G}_u)$.
For all $\mathcal{G}_u \in \mathcal{G}$, $\Lambda^* = \{\lambda_1, \lambda_2,...\}$ denotes the set of unique longest paths in the graphs.

\subsection{Response Time Analysis for \pdag{}s}
Most existing analysis~\cite{zhao2020dag,melani2015response,he2023real,melani2016schedulability,he2019intra} provides a single bound on the response time.
For instance, the Graham's bound~\cite{graham1969bounds} computes the response time $R$ of a given $\tau$ by $R \leq len(\tau) + \frac{1}{m}(\sum_{v_i\in \mathcal{V}} C_i-len(\tau))$, where $len(\tau)$ denotes the longest path among all paths in $\tau$.
However, these methods neglect the execution variability in \pdag{}s, in which only one $\theta_x^k$ can be executed in each $\theta_x$ with a probability of $F(\theta_x^k)$.
For instance, the \pdag{} in Fig.~\ref{fig:1b} has over 70\% and 25\% of probability to finish within time 26 and 21, respectively.
Hence, the traditional analysis fails to depict such variability of the worst case for a \pdag{}, resulting in a number of design limitations such as resource over-provisioning given predefined time limits (see the automotive example in Sec.~\ref{sec:intro}). 

To account for such uncertainties in a \pdag{},
an enumeration-based approach is presented in~\cite{ueter2021response}, which produces the probabilistic response time distribution by iterating through each $\mathcal{G}_u \in \mathcal{G}$.
For a $\mathcal{G}_u$, the response time is computed by Graham's bound with a probability of occurrence computed by $\prod_{\theta_x^k \in H(\mathcal{G}_u)} F(\theta_x^k)$, \ie, the probability of every $\theta_x^k \in H(\mathcal{G}_u)$ being executed.
With the response time and probability of every $\mathcal{G}_u$ determined, the complete probabilistic response time distribution of a \pdag{} can be established, \eg, the one in Fig.~\ref{fig:1b}.
However, by enumerating every $\mathcal{G}_u$ of a \pdag{}, this method incurs significant complexity and computation cost, leading to severe scalability issues that undermine its applicability, especially for large \pdag{}s (see Sec.~\ref{sec:exp} for experimental results).


To address the above issues, this paper proposes a new analysis for a \pdag{} by exploiting the set of the longest paths, providing tight bounds on the response times and their probabilities while eliminating the need for enumeration. 
To achieve this, a method is constructed that identifies the exact set of longest paths (\ie, $\Lambda^*$) across all $\mathcal{G}_u \in \mathcal{G}$ (Sec.~\ref{sec:identify}).
For each $\lambda_h \in \Lambda^*$, Sec.~\ref{sec:prob} computes the probability where $\lambda_h$ is executed and is the longest path. 
Finally, the probabilistic response time distribution of $\tau$ can be constructed with the worst-case interfering workload of each $\lambda_h \in \Lambda^*$ determined (Sec.~\ref{sec:interference}). 

\section{Identifying the Longest Paths of a \pdag{}}
\label{sec:identify}

Existing methods determine the $\Lambda^*$ of $\tau$ by enumerating every $\mathcal{G}_u \in \mathcal{G}$~\cite{ueter2021response}, which significantly intensifies the complexity of analysing \pdag{}s.
In addition, this would result in redundant computations as certain graphs might have the same longest path.
For instance, Fig.~\ref{fig:2a} shows a \pdag{} with the longest path of $\{v_1, v_2,v_5,..., v_{14} \}$ regardless of whether $v_{10}$ or $v_{11}$ is executed.
To address this, this section presents a method that computes $\Lambda^*$ based on the lower bound on all the longest paths of a given \pdag{} $\tau$ and its sub-structures.
To achieve this, a function $\Delta(\tau)$ is constructed that produces the lower bound on paths in $\Lambda^*$ of $\tau$ (Sec.~\ref{sec:MLP}).
Then, we show that $\Delta(\cdot)$ can be effectively applied on $\tau$ and its sub-structures to identify $\Lambda^*$ without the need for enumeration (Sec.~\ref{sec:Threshold}).

\begin{figure}[t]
\vspace{-6pt}
\centering
\subfigbottomskip=1pt 
\subfigure[The path with $v_5$ is the longest path regardless of $v_{10}$ or $v_{11}$ is executed.]{\label{fig:2a} 
\includegraphics[width=.52\columnwidth]{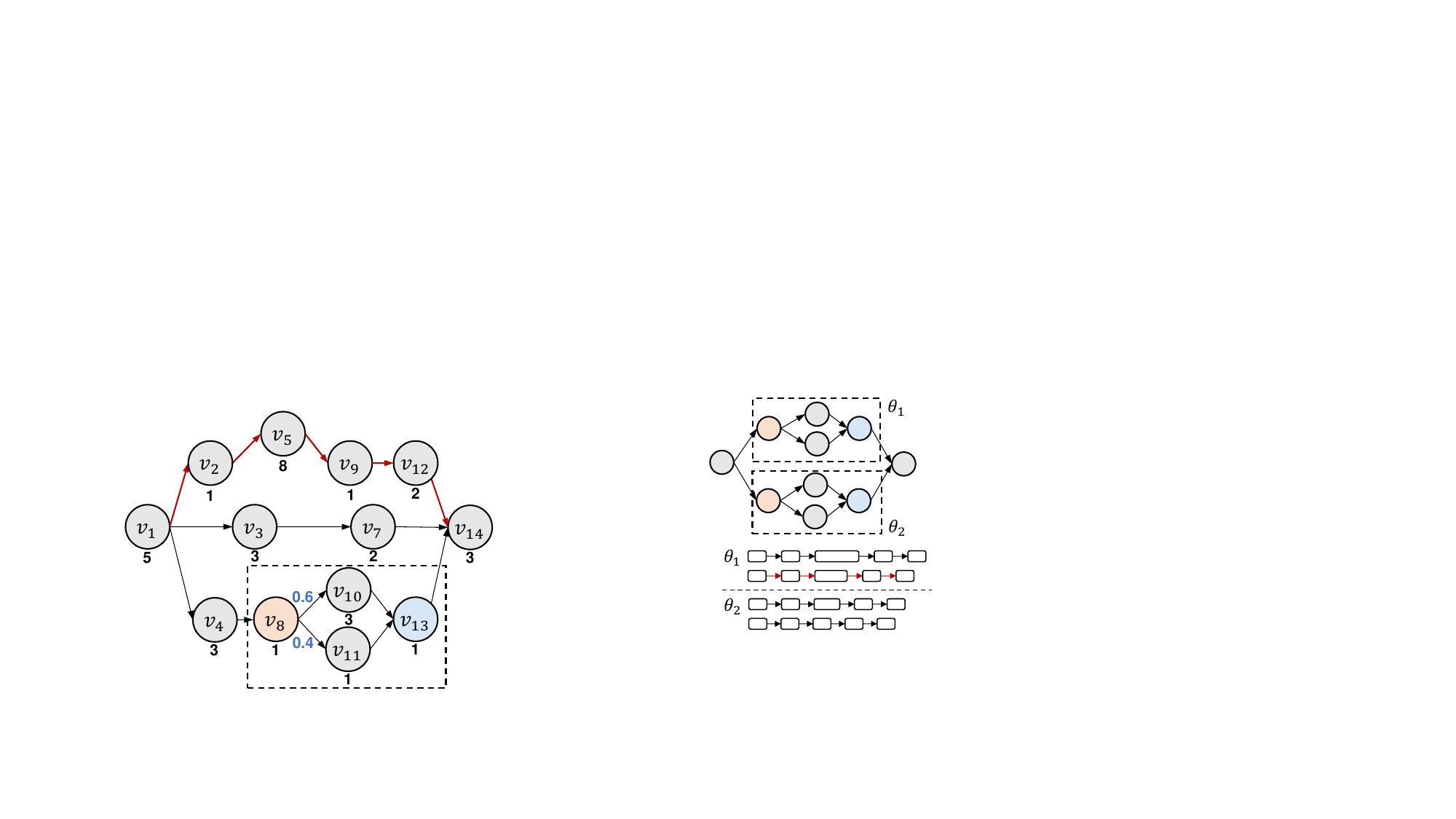}}
\subfigure[{A scenario that every path in $\theta_1$ is longer than paths in $\theta_2$.}]{\label{fig:2b} 
\includegraphics[width=.40\columnwidth]{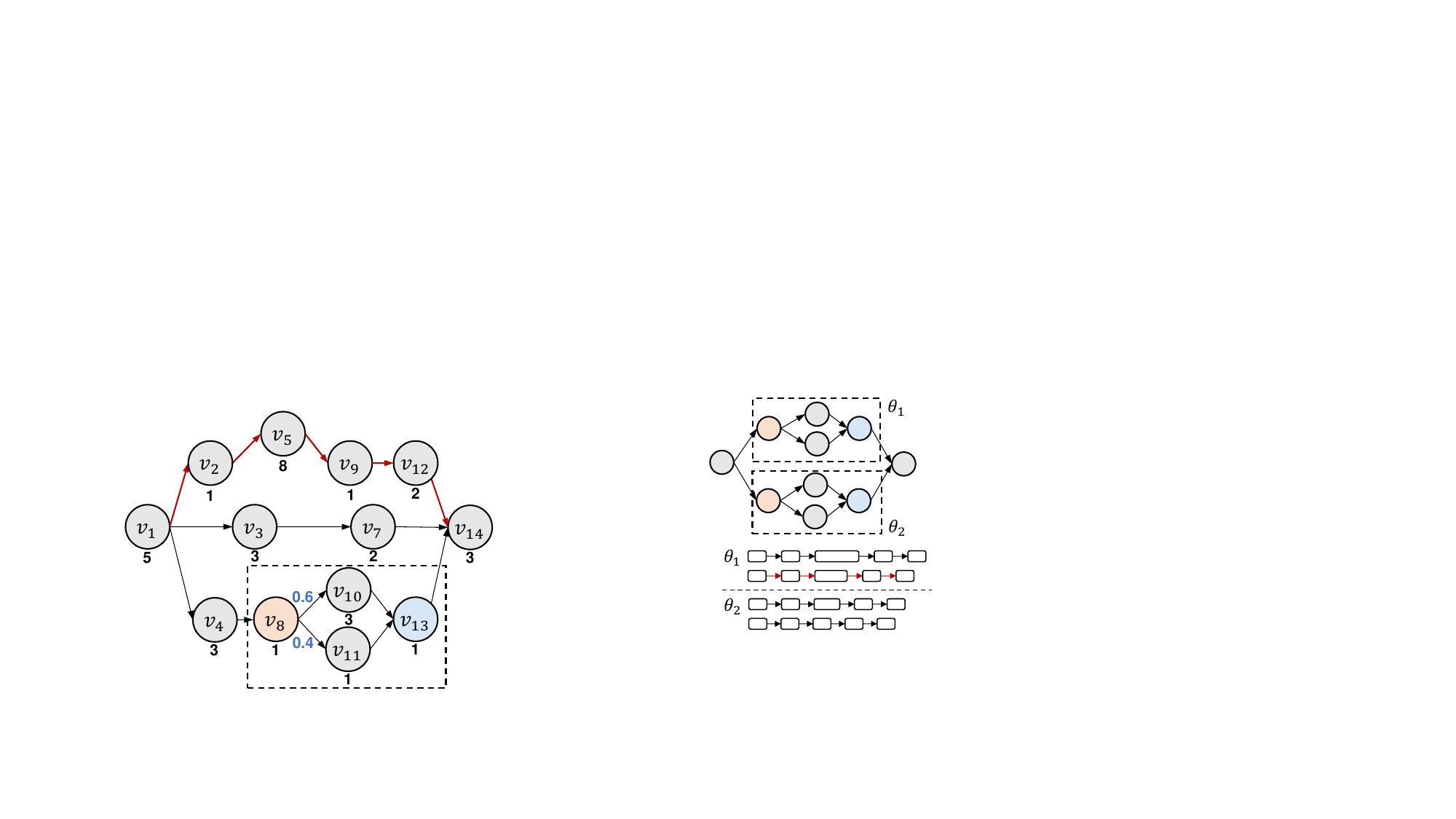}}
\caption{{The illustrative examples used in Sec.~\ref{sec:identify}.}}
\vspace{-8pt}
\label{figs:instances}
\end{figure}

\subsection{Determining the Lower Bound of $\Lambda^*$ for a \pdag{}}\label{sec:MLP}

To compute $\Delta(\tau)$, we first identify the graph (denoted as $\mathcal{G}^\diamond$) that has the minimum longest path in $\mathcal{G}$, \ie, $\Delta(\tau) = len(\mathcal{G}^\diamond) \leq len(\mathcal{G}_u), \forall \mathcal{G}_u \in \mathcal{G}$. 
For a $\tau$, its $\mathcal{G}^\diamond$ can be identified by Theorem~\ref{the:diamond_pragh}. 
\begin{theorem}\label{the:diamond_pragh}
Let $\theta_x^\diamond$ denote the branch executed in $\theta_x$ under $\mathcal{G}^\diamond$, 
it follows that $len(\theta_x^\diamond) \leq len(\theta_x^k), \forall{\theta_x^k \in \theta_x}, \forall{\theta_x \in \Theta}$.
\end{theorem}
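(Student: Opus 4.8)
The plan is to prove the statement by a local \emph{branch-exchange} argument: I will show that, in any non-conditional graph, replacing the branch selected in one probabilistic structure by another branch whose longest path is no longer cannot lengthen the longest path of the whole graph, and then iterate this over all $n$ structures. Throughout, for a graph $\mathcal{G}_u \in \mathcal{G}$ let $b_x(\mathcal{G}_u) \in \theta_x$ denote the branch selected in $\theta_x$, and recall that $len(\mathcal{G}_u)$ is the maximum of $\sum_{v_i \in \lambda} C_i$ over all source-to-sink paths $\lambda$ of $\mathcal{G}_u$. I read $len(\theta_x^k)$ as the longest $v^{src}_x$-to-$v^{snk}_x$ length through branch $\theta_x^k$, so that the workload a path picks up while crossing $\theta_x$ via $\theta_x^k$ is at most $len(\theta_x^k)$ and this bound is attained by some crossing segment.

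The key lemma I would establish is the following. Suppose $\mathcal{G}'$ and $\mathcal{G}''$ agree on the branch choice of every structure except $\theta_x$, where $b_x(\mathcal{G}') = \theta_x^a$, $b_x(\mathcal{G}'') = \theta_x^b$, and $len(\theta_x^a) \le len(\theta_x^b)$; then $len(\mathcal{G}') \le len(\mathcal{G}'')$. To prove it, take a longest path $\lambda$ of $\mathcal{G}'$ and use the single-entry, single-exit property of a probabilistic structure: every source-to-sink path either never reaches $v^{src}_x$, in which case (since all branch nodes and $v^{snk}_x$ have $v^{src}_x$ as an ancestor on any path that reaches them) it uses only nodes and edges lying outside the body of $\theta_x$, which are common to $\mathcal{G}'$ and $\mathcal{G}''$, so it is also a path of $\mathcal{G}''$ of the same length; or it enters $\theta_x$ at $v^{src}_x$, runs through an internal crossing segment of $\theta_x^a$, and leaves at $v^{snk}_x$, so it decomposes as $\lambda = \pi_{\mathrm{pre}} \cdot \sigma_a \cdot \pi_{\mathrm{post}}$, where $\pi_{\mathrm{pre}}$ ends at $v^{src}_x$, $\pi_{\mathrm{post}}$ starts at $v^{snk}_x$, both are disjoint from the body of $\theta_x$, and $|\sigma_a| \le len(\theta_x^a)$. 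Replacing $\sigma_a$ by a longest crossing segment $\sigma_b^{\star}$ of $\theta_x^b$ produces a valid path of $\mathcal{G}''$ (the prefix and suffix are unchanged and are re-glued at the shared nodes $v^{src}_x$ and $v^{snk}_x$) of length $|\pi_{\mathrm{pre}}| + len(\theta_x^b) + |\pi_{\mathrm{post}}| \ge |\pi_{\mathrm{pre}}| + |\sigma_a| + |\pi_{\mathrm{post}}| = |\lambda| = len(\mathcal{G}')$. In both cases $len(\mathcal{G}') \le len(\mathcal{G}'')$.

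The theorem then follows. Take $\mathcal{G}^\diamond$ to be the graph obtained by selecting, in each $\theta_x$, a branch of minimum longest-path length; by construction its selected branch $\theta_x^\diamond$ satisfies $len(\theta_x^\diamond) \le len(\theta_x^k)$ for all $\theta_x^k \in \theta_x$ and all $\theta_x \in \Theta$, which is exactly the claimed property, so it only remains to confirm that this $\mathcal{G}^\diamond$ realizes $\Delta(\tau) = \min_u len(\mathcal{G}_u)$. Starting from an arbitrary $\mathcal{G}_u \in \mathcal{G}$ and rewriting its branch choices one structure at a time into a minimum-length branch, each rewrite yields another graph in $\mathcal{G}$ (branches contain only non-conditional nodes, so there is no nesting and the choices across structures are independent) and, by the lemma, does not increase the longest path; after at most $n$ rewrites we reach $\mathcal{G}^\diamond$, hence $len(\mathcal{G}^\diamond) \le len(\mathcal{G}_u)$. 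Thus $\mathcal{G}^\diamond$ attains the minimum. (If instead one fixes $\mathcal{G}^\diamond$ as \emph{some} minimizer and some $\theta_x$ had $len(\theta_x^\diamond) > len(\theta_x^j)$, the lemma exhibits a branch swap that does not increase $len$, so one may always take $\mathcal{G}^\diamond$ to be the canonical shortest-branch graph, and the argument above shows this choice is indeed optimal.)

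The step I expect to be the main obstacle is the path surgery inside the lemma: making the decomposition $\lambda = \pi_{\mathrm{pre}} \cdot \sigma_a \cdot \pi_{\mathrm{post}}$ and the re-gluing fully rigorous, in particular arguing from single-entry, single-exit structure that a source-to-sink path which touches the body of $\theta_x$ must enter exactly at $v^{src}_x$ and leave exactly at $v^{snk}_x$ (so $\pi_{\mathrm{pre}}$ and $\pi_{\mathrm{post}}$ are disjoint from that body and live in the part common to both graphs), and being careful about whether $len(\theta_x^k)$ counts the endpoint nodes $v^{src}_x, v^{snk}_x$ so that the length accounting is not off by $C$ of an endpoint. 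A secondary, lighter point is to record explicitly that because branches consist only of non-conditional nodes there is no nesting of probabilistic structures, which is what makes the one-structure-at-a-time rewriting well defined and keeps every intermediate graph in $\mathcal{G}$.
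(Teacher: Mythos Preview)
Your proof is correct, and the core idea---monotonicity of the graph's longest path in the length of the selected branch---is the same as the paper's. The paper, however, packages it as a four-line contradiction: assume some $\mathcal{G}_u$ has $len(\mathcal{G}^\diamond) > len(\mathcal{G}_u)$, assert that this forces some $\theta_x$ with $len(\theta_x^\diamond) > len(\theta_x^k)$ for the branch $\theta_x^k$ chosen in $\mathcal{G}_u$, and derive a contradiction with $\theta_x^\diamond$ being shortest. That asserted implication is precisely the contrapositive of your branch-exchange lemma, but the paper does not justify it; the single-entry/single-exit path surgery you spell out is the missing content. So you take a constructive route (prove the one-structure swap lemma explicitly, then iterate over all $\theta_x$) where the paper takes a compressed contradiction that leaves the structural step implicit. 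Your version buys a self-contained argument and also cleanly separates the sufficiency direction (shortest-branch graph is a minimiser) from the necessity direction you note in the parenthetical, which the paper handles only informally in a remark after the proof. The obstacle you flag---making the decomposition $\lambda = \pi_{\mathrm{pre}}\cdot\sigma_a\cdot\pi_{\mathrm{post}}$ rigorous via the entry/exit nodes---is indeed the only substantive step, and your outline handles it correctly.
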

\begin{proof}
Suppose there exists a graph $\mathcal{G}_u$ such that $len(\mathcal{G}^\diamond) > len(\mathcal{G}_u)$.
In this case, 
there exists at least one $\theta_x \in \Theta$ in which the branch being executed in $\mathcal{G}_u$ (say $\theta_x^k$) follows $len(\theta_x^\diamond) > len(\theta_x^k)$.
However, this contradicts with the assumption that $\theta^\diamond_x$ has the lowest length among all branches in $\theta_x$.
Hence, the theorem holds.
\end{proof}

We note that Theorem~\ref{the:diamond_pragh} is a sufficient condition for $\mathcal{G}^\diamond$. 
If $\tau$ has a single non-conditional longest path, then $\Delta(\tau)=len(\mathcal{G}^\diamond) = len(\mathcal{G}_u), \forall \mathcal{G}_u \in \mathcal{G}$ regardless of the branch being executed in each $\theta_x$. 
However, this does not undermine computations of $\Delta(\tau)$ based on Theorem~\ref{the:diamond_pragh} and the identification of $\Lambda^*$ based on $\Delta(\tau)$.



With Theorem~\ref{the:diamond_pragh}, Alg.~\ref{algs:neg_path} computes $\Delta(\tau)$ by constructing $\mathcal{G}^\diamond$ based on $\{\mathcal{V}, \mathcal{E}, \Theta\}$ of $\tau$.
The algorithm first initialises $\mathcal{V}^\diamond$ with all the non-conditional nodes in $\mathcal{V}$, which are executed under any graph of $\tau$ (line 1).
Then, for each $\theta_x$ in $\Theta$, the shortest branch $\theta_x^\diamond$ is identified, and the corresponding nodes are added to $\mathcal{V}^\diamond$ (lines 3-6).
Based on $\mathcal{V}^\diamond$, the set of edges that connect these nodes is obtained as $\mathcal{E}^\diamond$ (line 7).
Finally, with $\mathcal{G}^\diamond=\{\mathcal{V}^\diamond, \mathcal{E}^\diamond\}$ constructed, $\Delta(\tau)$ is computed at line 8 using the Deep First Search in linear complexity~\cite{he2019intra}.
For the \pdag{} in Fig.~\ref{figs:CDAG_example}, the $\mathcal{G}^\diamond$ is obtained when $v_6$ and $v_{11}$ are executed, with a longest path of $\{v_1, v_2,v_6,...,v_{14}\}$ and $\Delta(\tau)= 15$.

\begin{algorithm}[t]
\caption{Computation of $\Delta(\tau)$ for a \pdag{} $\tau$} 
\label{algs:neg_path}
$\mathcal{V}^\diamond = \{v_i ~|~ v_i \in \mathcal{V} \wedge v_i \notin \theta_x, \forall \theta_x \in \Theta \}$;\\
{\scriptsize\ttfamily/* Identify nodes in $\mathcal{G}^\diamond$ by Theorem~\ref{the:lambda_neg}. */} \\
\For{$\theta_x \in \Theta$}{
    $\theta^\diamond_x =  \mathop{argmin}_{\theta_x^k}\{ len(\theta_x^k) ~|~ \theta_x^k \in \theta_x \}$;\\
    $\mathcal{V}^\diamond =  \mathcal{V}^\diamond \cup \theta^\diamond_x$;
}

$\mathcal{E}^\diamond = \{e_{i,j} ~|~ v_i,v_j \in \mathcal{V}^\diamond \wedge e_{i,j} \in \mathcal{E} \}$;\\

\Return $len(\mathcal{G}^\diamond = \{\mathcal{V}^\diamond, \mathcal{E}^\diamond\})$;\\
\end{algorithm}

\subsection{Identifying $\Lambda^*$ of a \pdag{} based on $\Delta(\tau)$}\label{sec:Threshold}

With function $\Delta(\tau)$, 
Alg.~\ref{algs:filter} is constructed to compute the $\Lambda^*$ of $\tau$.
Essentially, the algorithm takes $\{\mathcal{V}, \mathcal{E}, \Theta, \Lambda\}$ of $\tau$ as the input, and obtains $\Lambda^*$ by removing the paths that are always dominated by a longer one. 
First, the algorithm identifies the candidates of $\Lambda^*$ by removing paths that are shorter than $\Delta(\tau)$ (line 2) based on Lemma~\ref{the:lambda_neg}.
This effectively speeds up the algorithm, in which only the candidate paths will be examined in later computations. 
\begin{lemma} \label{the:lambda_neg}
For any $\lambda_h \in \Lambda^*$ of $\tau$, it follows that $len(\lambda_h) \geq \Delta(\tau)$.
\end{lemma}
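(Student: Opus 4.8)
The plan is to reduce the claim to the minimality property of $\mathcal{G}^\diamond$ and then quote Theorem~\ref{the:diamond_pragh}. By the definition of $\Lambda^*$, a path belongs to $\Lambda^*$ exactly when it is the longest path of at least one released graph; hence for any $\lambda_h \in \Lambda^*$ there is some $\mathcal{G}_u \in \mathcal{G}$ with $len(\lambda_h) = len(\mathcal{G}_u)$. The lemma therefore follows once we show $len(\mathcal{G}_u) \geq \Delta(\tau)$ for every $\mathcal{G}_u \in \mathcal{G}$, i.e. that $\Delta(\tau) = len(\mathcal{G}^\diamond)$ is a global lower bound on the longest-path lengths of the released graphs.

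First I would fix an arbitrary $\mathcal{G}_u \in \mathcal{G}$ and compare it with $\mathcal{G}^\diamond$ structure by structure. Both graphs share the same skeleton of non-conditional nodes and differ only in which branch $\theta_x^k$ is instantiated inside each probabilistic structure $\theta_x$; by construction $\mathcal{G}^\diamond$ instantiates the shortest branch $\theta_x^\diamond$, which by Theorem~\ref{the:diamond_pragh} satisfies $len(\theta_x^\diamond) \leq len(\theta_x^k)$ for the branch $\theta_x^k$ chosen in $\mathcal{G}_u$. Because each branch consists of non-conditional nodes only, any source-to-sink path in $\mathcal{G}_u$ decomposes into fixed non-conditional segments together with, for each structure it traverses, a sub-path lying inside a single branch; replacing those branches by the corresponding $\theta_x^\diamond$ can only shorten such a path (the longest sub-path available inside $\theta_x^\diamond$ is no longer than the one inside $\theta_x^k$), and hence cannot increase the maximum taken over all source-to-sink paths. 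This yields $len(\mathcal{G}^\diamond) \leq len(\mathcal{G}_u)$, and since $\mathcal{G}_u$ was arbitrary, $\Delta(\tau) \leq len(\mathcal{G}_u)$ for all $\mathcal{G}_u \in \mathcal{G}$. Combining with the first step gives $len(\lambda_h) = len(\mathcal{G}_u) \geq \Delta(\tau)$.

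The only step that is more than bookkeeping is the monotonicity argument in the middle: one must argue carefully that the longest path of the whole graph is non-decreasing in the lengths of the instantiated branches, handling the fact that different source-to-sink paths may pass through different (possibly no) probabilistic structures and that a branch is a sub-graph rather than a single edge, so ``shorter branch'' must be read as ``shorter longest internal path, for every entry/exit pair used''. Once this monotonicity is pinned down, the rest is immediate from the definition of $\Lambda^*$ and Theorem~\ref{the:diamond_pragh}; in particular note that $\mathcal{G}^\diamond$ is itself a member of $\mathcal{G}$, so $\Delta(\tau)$ is attained and the bound cannot be improved.
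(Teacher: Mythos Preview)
Your argument is correct and follows the same route as the paper: the paper's proof is the one-liner ``This follows directly from Theorem~\ref{the:diamond_pragh} and Alg.~\ref{algs:neg_path},'' relying on the already-stated fact $\Delta(\tau)=len(\mathcal{G}^\diamond)\leq len(\mathcal{G}_u)$ for all $\mathcal{G}_u\in\mathcal{G}$, which you simply unpack in detail. The only remark is that in this model each $\theta_x$ has a single entry node $v^{src}_x$ and a single exit node $v^{snk}_x$, so your caveat about ``every entry/exit pair used'' is unnecessary here.
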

\begin{proof}
This follows directly from Theorem~\ref{the:diamond_pragh} and Alg.~\ref{algs:neg_path}.
\end{proof}

With the candidate paths identified, the algorithm further determines whether a path is always dominated by another path across all scenarios (lines 3-15). 
For two candidate paths $\lambda_a$ and $\lambda_b$, their executions can be categorised into the following three scenarios. 
For S1, there exists no dominance relationship between $\lambda_a$ and $\lambda_b$ as they are not executed simultaneously in any $\mathcal{G}_u$.
Below we focus on determining whether $\lambda_a$ always dominates $\lambda_b$ under S2 and S3.
\begin{itemize}
\item[S1.] $\lambda_a$ and $\lambda_b$ {are never executed} in the same $\mathcal{G}_u$, \ie, a $\theta_x \in S(\lambda_a) \cap S(\lambda_b)$ exists such that $\theta_x^{\alpha} \in H(\lambda_a) \wedge \theta_x^{\beta} \in H(\lambda_b)$;
\item[S2.] $\lambda_a$ and $\lambda_b$ are {always executed or not simultaneously} in any $\mathcal{G}_u \in \mathcal{G}$, \ie, $H(\lambda_a) = H(\lambda_b)$;
\item[S3.] $\lambda_a$ and $\lambda_b$ {can be executed} in the same or in different graphs, \ie, for every $\theta_x \in S(\lambda_a) \cap S(\lambda_b)$, $\theta_x^k \in H(\lambda_a) \cap H(\lambda_b)$.
\end{itemize}


For S2, $\lambda_b$ is not the longest in any $\mathcal{G}_u$ if $len(\lambda_a)> len(\lambda_b)$, as described in Lemma~\ref{lem:identical}.
Following this, the algorithm removes $\lambda_b$ from $\Lambda^*$ if the conditions in Lemma~\ref{lem:identical} hold (lines 5-7).
\begin{lemma}\label{lem:identical}
    For two paths $\lambda_a$ and $ \lambda_b$ with $H(\lambda_a)=H(\lambda_b)$, $\lambda_b \notin \Lambda^*$ if $len(\lambda_a) > len(\lambda_b)$.
\end{lemma}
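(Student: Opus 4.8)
The plan is to prove Lemma~\ref{lem:identical} directly from the definitions, using the characterisation of S2 via the branch sets $H(\cdot)$. The key observation is that $H(\lambda_a)=H(\lambda_b)$ means the two paths traverse \emph{exactly the same} set of executed branches, so any non-conditional graph $\mathcal{G}_u$ that contains $\lambda_a$ also contains $\lambda_b$ and vice versa; in other words, $\lambda_a$ and $\lambda_b$ live in precisely the same subset of graphs in $\mathcal{G}$.

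First I would fix an arbitrary $\mathcal{G}_u \in \mathcal{G}$ and argue the following dichotomy: either $\lambda_a \notin \mathcal{G}_u$, in which case (since $H(\lambda_a)=H(\lambda_b)$) we also have $\lambda_b \notin \mathcal{G}_u$, so $\lambda_b$ cannot be the longest path of $\mathcal{G}_u$ simply because it is not a path of $\mathcal{G}_u$ at all; or $\lambda_a \in \mathcal{G}_u$, in which case $\lambda_b \in \mathcal{G}_u$ as well, and now both are genuine paths of the same non-conditional graph. In the latter case, the hypothesis $len(\lambda_a) > len(\lambda_b)$ immediately gives $len(\mathcal{G}_u) \geq len(\lambda_a) > len(\lambda_b)$, so $\lambda_b$ is strictly shorter than the longest path of $\mathcal{G}_u$ and therefore is not the longest path of $\mathcal{G}_u$. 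Combining the two cases, $\lambda_b$ fails to be the longest path of \emph{any} $\mathcal{G}_u \in \mathcal{G}$.

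Then I would close the argument by invoking the definition of $\Lambda^*$: recall $\Lambda^* = \{\lambda_1, \lambda_2, \dots\}$ is the set of \emph{unique longest paths} over all $\mathcal{G}_u \in \mathcal{G}$, i.e.\ $\lambda \in \Lambda^*$ iff $\lambda$ is the longest path of some $\mathcal{G}_u$. Since we have shown $\lambda_b$ is the longest path of no $\mathcal{G}_u$, it follows that $\lambda_b \notin \Lambda^*$, which is exactly the claim. A small technical point worth spelling out is the meaning of ``$\lambda \in \mathcal{G}_u$'': a path belongs to a non-conditional graph iff every probabilistic branch it passes through is among those executed in $\mathcal{G}_u$, i.e.\ $H(\lambda) \subseteq H(\mathcal{G}_u)$; with this, the equivalence $\lambda_a \in \mathcal{G}_u \iff \lambda_b \in \mathcal{G}_u$ under $H(\lambda_a)=H(\lambda_b)$ is immediate.

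I do not expect a serious obstacle here — the lemma is essentially a bookkeeping consequence of the S2 condition plus the definition of the longest path. The only place to be slightly careful is the edge case where $\lambda_b$ is not even present in the graph under consideration; a reader might object that a ``shorter path'' argument needs both paths to coexist, so I would make the two-case split explicit rather than only treating the coexistence case. Also worth a sentence: the strictness of $len(\lambda_a) > len(\lambda_b)$ is what rules out $\lambda_b$ being \emph{a} longest path in case of ties, so the inequality cannot be relaxed to $\geq$.
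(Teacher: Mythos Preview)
Your proposal is correct and follows essentially the same approach as the paper's own proof: observe that $H(\lambda_a)=H(\lambda_b)$ forces $\lambda_a$ and $\lambda_b$ to coexist (or both be absent) in every $\mathcal{G}_u$, so $\lambda_b$ is always dominated by $\lambda_a$ and hence $\lambda_b\notin\Lambda^*$. Your version is more explicit about the two-case split and the role of strict inequality, but the underlying argument is identical.
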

\begin{proof}
If $H(\lambda_a)=H(\lambda_b)$, $\lambda_a$ and $\lambda_b$ are always executed or not simultaneously under any $\mathcal{G}_u \in \mathcal{G}$. Thus, if $len(\lambda_a) \geq len(\lambda_b)$, $\lambda_b$ is constantly dominated by $\lambda_a$. Hence, the lemma follows. 
\end{proof}

For S3, it is challenging to directly determine the dominance relationship between $\lambda_a$ and $\lambda_b$, as they can be executed in different graphs.
However, if $\lambda_a$ is not executed, an alternative path with the same $S(\lambda_a)$ will be executed, as one branch of each $\theta_x$ must be executed in a graph.
Thus, if $\lambda_b$ is shorter than any of such paths, it is not the longest path in any graph.
For instance, Fig.~\ref{fig:2b} presents a \pdag{} in which the shortest path in $\theta_1$ is longer than any path in $\theta_2$; hence a path that goes through $\theta_2$ is always dominated.

To determine whether $\lambda_b$ is dominated under S3, a sub-structure of $\tau$ is constructed based on $\lambda_a$ and its alternative paths, denoted by $\tau_a=\{\mathcal{V}_a, \mathcal{E}_a, \Theta_a \}$ (lines 10-11). 
First, $\Theta_a$ is constructed by the unique probabilistic structures of $\lambda_a$ \ie, $\theta_x \in S(\lambda_a) \wedge \theta_x \notin S(\lambda_b)$. Then, $\mathcal{V}_a$ and $\mathcal{E}_a$ are computed by nodes in $\lambda_a$ and $\Theta_a$.
Based on $\Delta(\tau_a)$, Lemma~\ref{the:asynch} describes the case where $\lambda_b$ is dominated under S3, and hence, is removed from $\Lambda^*$ by the algorithm (lines 12-14).
\begin{lemma}\label{the:asynch}
For $\lambda_a$ and $\lambda_b$ with $\theta_x^k \in H(\lambda_a) \cap H(\lambda_b), \forall \theta_x \in S(\lambda_a) \cap S(\lambda_b)$, 
it follows that $\lambda_b \notin \Lambda^*$ if $\Delta(\tau_a) > len(\lambda_b)$.
\end{lemma}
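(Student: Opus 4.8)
The plan is to show that whenever $\lambda_b$ is executed in some graph $\mathcal{G}_u \in \mathcal{G}$, that same graph also contains a path $\lambda'$ drawn from the sub-structure $\tau_a$ with $len(\lambda') \geq \Delta(\tau_a) > len(\lambda_b)$, so $\lambda_b$ is never the longest path and hence $\lambda_b \notin \Lambda^*$. First I would fix an arbitrary $\mathcal{G}_u$ in which $\lambda_b$ is executed, i.e. $H(\lambda_b) \subseteq H(\mathcal{G}_u)$. The key structural observation is that the probabilistic structures in $\Theta_a = \{\theta_x \in S(\lambda_a) : \theta_x \notin S(\lambda_b)\}$ are disjoint from the structures constrained by $\lambda_b$: for each $\theta_x \in \Theta_a$, the branch of $\theta_x$ taken in $\mathcal{G}_u$ is unconstrained by $H(\lambda_b)$, since $\theta_x \notin S(\lambda_b)$. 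Meanwhile, for the shared structures $\theta_x \in S(\lambda_a) \cap S(\lambda_b)$, the hypothesis $\theta_x^k \in H(\lambda_a) \cap H(\lambda_b)$ guarantees $\lambda_a$ and $\lambda_b$ agree on the branch taken, so those structures are already resolved compatibly in $\mathcal{G}_u$.

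Next I would argue that $\mathcal{G}_u$ restricted to the nodes and edges of $\tau_a$ contains (as a subgraph) one of the non-conditional graphs $\mathcal{G}_w \in \mathcal{G}(\tau_a)$ generated by $\tau_a$: each $\theta_x \in \Theta_a$ selects exactly one branch in $\mathcal{G}_u$, and the shared structures are fixed to the $\lambda_a$-consistent branch, which is exactly how graphs of $\tau_a$ are formed. By definition of $\Delta(\cdot)$ (Lemma~\ref{the:lambda_neg} applied to $\tau_a$, which rests on Theorem~\ref{the:diamond_pragh}), every graph of $\tau_a$ has longest path at least $\Delta(\tau_a)$. So $\mathcal{G}_u$ contains a path of length at least $\Delta(\tau_a)$ lying entirely inside $\tau_a$; call it $\lambda'$. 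Since $\tau_a$'s nodes and edges are a subset of $\tau$'s, $\lambda'$ is a genuine path of $\mathcal{G}_u$. Combining with the assumption $\Delta(\tau_a) > len(\lambda_b)$ gives $len(\lambda') \geq \Delta(\tau_a) > len(\lambda_b)$, so $\lambda_b$ is not the longest path of $\mathcal{G}_u$. As $\mathcal{G}_u$ was an arbitrary graph in which $\lambda_b$ appears, $\lambda_b$ is never a longest path, i.e. $\lambda_b \notin \Lambda^*$.

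The main obstacle I anticipate is the bookkeeping in the middle step: making precise that the branch selections forced by $H(\mathcal{G}_u)$ on the structures of $\tau_a$ do in fact realize a legitimate member of $\mathcal{G}(\tau_a)$, and that the longest path of that member survives as an actual path of $\mathcal{G}_u$ (rather than being an artifact of removing nodes). This requires checking that $\tau_a$ inherits a single source/sink consistent with $\tau$ and that no edge of $\lambda'$ leaves $\mathcal{V}_a$ — which follows from how $\mathcal{V}_a$ and $\mathcal{E}_a$ are built in lines 10–11, but should be stated explicitly. A secondary subtlety is confirming that $\tau_a$ is nonempty and well-formed even when $S(\lambda_a) \subseteq S(\lambda_b)$ (i.e. $\Theta_a = \emptyset$), in which case $\tau_a$ degenerates to $\lambda_a$ itself and $\Delta(\tau_a) = len(\lambda_a)$, so the lemma reduces to the straightforward claim that $\lambda_a$ dominates $\lambda_b$ in every graph where $\lambda_b$ runs.
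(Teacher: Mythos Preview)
Your proposal is correct and follows essentially the same argument as the paper: whenever $\lambda_b$ is executed in some $\mathcal{G}_u$, the sub-structure $\tau_a$ contributes a path of length at least $\Delta(\tau_a) > len(\lambda_b)$, so $\lambda_b$ is never the longest. The paper states this tersely as a contradiction, whereas you spell out the branch-selection bookkeeping and the degenerate $\Theta_a=\emptyset$ case explicitly; this extra care is welcome but does not change the route.
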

\begin{proof}
Suppose that $\lambda_b \in \Lambda^*$ follows $\Delta(\tau_a)>len(\lambda_b)$.
In this case, there always exists a longer path as long as $\lambda_b$ is executed, \ie, a path in $\tau_a$ with a length equal to or higher than $\Delta(\tau_a)$.
Hence, $\lambda_b$ is not the longest in any $\mathcal{G}_u \in \mathcal{G}$, which contradicts with the assumption that $\lambda_b \in \Lambda^*$, and hence, the lemma holds.
\end{proof}

\begin{algorithm}[t]
\caption{Calculation of $\Lambda^*$}\label{algs:filter}

{\scriptsize\ttfamily/* Identify the candidates of $\Lambda^*$ by Lemma~\ref{the:lambda_neg}*/} \\

$\Lambda^* = \{ \lambda_h ~|~ \lambda_h \in \Lambda \wedge len(\lambda_h) \geq \Delta(\mathcal{V},\mathcal{E},\Theta) \}$\;

\For{$\lambda_a, \lambda_b \in \Lambda^*$}{

{\scriptsize\ttfamily/* Determine the dominant path of S2 by Lemma~\ref{lem:identical} */} \\
\If{$H(\lambda_a) = H(\lambda_b) \wedge len(\lambda_a) \geq len(\lambda_b)$}{
{$\Lambda^* = \Lambda^* \setminus  \lambda_b$;}
}

{\scriptsize\ttfamily/* Determine the dominant path of S3 by Lemma~\ref{the:asynch} */} \\
\If{\text{\normalfont{Lemma~\ref{the:asynch} holds for} $\lambda_a$ and $\lambda_b$}}{
$\Theta_a= 
\{\theta_x ~|~ \theta_x \in S(\lambda_a) \wedge \theta_x \notin S(\lambda_b)  \}$;\\
$\mathcal{V}_a = \lambda_a \cup \Theta_a$;~~~~$\mathcal{E}_a=\{e_{i,j} ~|~ v_i, v_j \in \mathcal{V}_a \wedge e_{i,j} \in \mathcal{E} \}$;\\

\If{$\Delta(\tau_a=\{ \mathcal{V}_a, \mathcal{E}_a, \Theta_a \}) \geq len(\lambda_b)$}{
    $\Lambda^* = \Lambda^* \setminus  \lambda_b$;
}
}
}

\Return $\Lambda^*$\;
\end{algorithm}

After every two candidate paths are examined, the algorithm terminates with $\Lambda^*$ of $\tau$ returned (line 17). 
For the \pdag{} in Fig.~\ref{figs:CDAG_example}, $|\Lambda^*|=3$ with $\lambda_1 = \{ v_1,v_2,v_5,v_9,v_{12},v_{14} \}$, $\lambda_2 = \{ v_1,v_2,v_6,v_9,v_{12},v_{14} \}$ and $\lambda_3 = \{ v_1,v_4,v_8,v_{10},v_{13},v_{14} \}$. 
More importantly, by utilising $\Delta(\cdot)$ and the relationship between paths, Alg.~\ref{algs:filter} identifies the exact set of longest path of a \pdag{} without the need for enumerating through every $\mathcal{G}_u \in \mathcal{G}$, as shown in Theorem~\ref{the:complete}. This provides the foundation for the constructed analysis of \pdag{} and the key of addressing the scalability issue.
   

\begin{theorem}\label{the:complete}
Alg.~\ref{algs:filter} produces the exact $\Lambda^*$ of a \pdag{} $\tau$.
\end{theorem}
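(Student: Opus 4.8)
The plan is to establish two directions: (i) \emph{soundness}, i.e.\ every path that Alg.~\ref{algs:filter} removes from $\Lambda^*$ is genuinely never a longest path in any $\mathcal{G}_u\in\mathcal{G}$, so no true longest path is ever discarded; and (ii) \emph{completeness}, i.e.\ every path surviving in $\Lambda^*$ at termination is indeed the longest path of at least one $\mathcal{G}_u\in\mathcal{G}$. Together these give $\Lambda^* = \{\lambda_1,\lambda_2,\dots\}$ exactly.

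For soundness, I would walk through the three removal sites of the algorithm. Line~2 removes any $\lambda_h$ with $len(\lambda_h) < \Delta(\tau)$; this is exactly Lemma~\ref{the:lambda_neg}, which guarantees $len(\lambda_h)\geq \Delta(\tau)$ for all $\lambda_h \in \Lambda^*$, so such a path cannot be a true longest path (in any $\mathcal{G}_u$ there is always a path of length at least $\Delta(\tau)$ by construction of $\mathcal{G}^\diamond$ and Theorem~\ref{the:diamond_pragh}). Line~6 removes $\lambda_b$ under S2 when $len(\lambda_a)\geq len(\lambda_b)$; this is Lemma~\ref{lem:identical}. Line~13 removes $\lambda_b$ under S3 when $\Delta(\tau_a)\geq len(\lambda_b)$; this is Lemma~\ref{the:asynch}. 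The one subtlety to check is that the loop iterates over \emph{all} ordered pairs and removals happen in place: I must argue that removing $\lambda_b$ does not later cause a path that \emph{should} be in $\Lambda^*$ to be dropped, and does not cause us to miss a removal. The cleanest way is to note that each of Lemmas~\ref{the:lambda_neg},~\ref{lem:identical},~\ref{the:asynch} is an \emph{intrinsic} property of the path(s) relative to the fixed structure of $\tau$ (it does not depend on which other paths are currently in $\Lambda^*$), so the set of paths that are ``never longest'' is determined up front; the loop merely needs to discover each such witness, and since it tries every ordered pair $(\lambda_a,\lambda_b)$, it does. Hence after termination $\Lambda^*$ contains no path that fails to be a longest path of some graph.

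For completeness, take any graph $\mathcal{G}_u\in\mathcal{G}$ and let $\lambda$ be a longest path of $\mathcal{G}_u$, so $len(\lambda)=len(\mathcal{G}_u)$ and $H(\lambda)\subseteq H(\mathcal{G}_u)$. I claim $\lambda$ survives in $\Lambda^*$. It passes line~2 since $len(\lambda)=len(\mathcal{G}_u)\geq \Delta(\tau)$. Suppose for contradiction it is removed at line~6 by some $\lambda_a$: then $H(\lambda_a)=H(\lambda)$ and $len(\lambda_a)\geq len(\lambda)$, but then $\lambda_a$ is a path present in the very same $\mathcal{G}_u$ (since $H(\lambda_a)=H(\lambda)\subseteq H(\mathcal{G}_u)$) with length $\geq len(\mathcal{G}_u)$, so $\lambda_a$ is also a longest path of $\mathcal{G}_u$ — fine, but then applying soundness to $\lambda_a$ it is never removed (or, if the algorithm later removes $\lambda_a$ too, we need to chase the chain). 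Suppose instead it is removed at line~13: then for every $\theta_x\in S(\lambda_a)\cap S(\lambda)$ we have $\theta_x^k\in H(\lambda_a)\cap H(\lambda)$, and $\Delta(\tau_a)\geq len(\lambda)$; but $\tau_a$ is built from $\lambda_a$ together with the probabilistic structures $\theta_x\in S(\lambda_a)\setminus S(\lambda)$, and since $\lambda$ is executed in $\mathcal{G}_u$, exactly one branch of each such $\theta_x$ is chosen in $\mathcal{G}_u$, yielding a path of length $\geq\Delta(\tau_a)\geq len(\lambda)$ that is present in $\mathcal{G}_u$ — again contradicting that $len(\lambda)$ is the longest, unless that longer path equals $\lambda$ in length. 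The essential point in both cases is that any path the algorithm uses to ``dominate'' $\lambda$ must itself be co-executable with $\lambda$ in $\mathcal{G}_u$ and at least as long, hence is itself a longest path of $\mathcal{G}_u$. To finish rigorously I would handle the removal chains by a minimality argument: pick a longest path $\lambda^{\min}$ of $\mathcal{G}_u$ that (among all longest paths of $\mathcal{G}_u$) is \emph{last} to be removed, or equivalently argue that the relation ``is used to remove'' restricted to the longest paths of a fixed $\mathcal{G}_u$ is acyclic because a strict length decrease or an index/ordering tie-break is enforced — so some longest path of $\mathcal{G}_u$ is never removed and therefore lies in $\Lambda^*$.

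The main obstacle is precisely this last point: the lemmas are each stated as ``if \dots\ then $\lambda_b\notin\Lambda^*$'', which is about membership in the \emph{true} set $\Lambda^*$, but the algorithm mutates a working set in an order-dependent way. I expect the bulk of the proof to be the bookkeeping argument that (a) no removal is ever ``spurious'' (soundness, handled pairwise by the three lemmas, which is routine), and (b) for each $\mathcal{G}_u$ at least one of its longest paths escapes every removal rule — which requires showing the domination used at lines~6 and~13 cannot eliminate \emph{all} longest paths of a given graph simultaneously. This is where I would lean on the fact that a dominating path is co-present and of equal-or-greater length, together with acyclicity of the ``dominated by'' relation on the finite set of equal-length longest paths of $\mathcal{G}_u$ (using strict inequality off the tie-class and a fixed pair-ordering within it), so a maximal element of that relation survives.
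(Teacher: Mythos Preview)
Your two-direction plan matches the paper's. In execution, however, you argue the \emph{same} inclusion twice and never argue the other. Your soundness paragraph correctly shows that each removal site (lines~2,~6,~13) is justified by Lemmas~1,~2,~3, which yields: \emph{removed $\Rightarrow$ not in the true $\Lambda^*$}, i.e.\ the algorithm's output contains every true longest path. But the sentence you close that paragraph with --- ``Hence after termination $\Lambda^*$ contains no path that fails to be a longest path of some graph'' --- is the \emph{other} inclusion, and it does not follow from what precedes it. Your completeness paragraph then picks a true longest path of some $\mathcal{G}_u$ and argues it survives, which is again the first inclusion (the contrapositive of soundness). All the chain-chasing and minimality machinery you build there is superfluous for that direction: once each individual removal is sound, no true longest path is ever removed, full stop; there is no chain to chase.

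What you never address is the direction the paper actually spends its second half on: that every path which is \emph{not} the longest in any $\mathcal{G}_u$ is in fact removed. The paper's (brief) argument is: if $\lambda_a$ survives but is never the longest, then in every graph containing $\lambda_a$ it is dominated by some longer path $\lambda_b$ (or by an alternative of $\lambda_b$ in $\tau_b$), so the loop over pairs will encounter such a witness and one of Lemmas~2 or~3 will fire to remove $\lambda_a$. Your order-dependence worry is relevant precisely \emph{here}, not for soundness --- one must check that a suitable witness $\lambda_b$ is still in the working set when the pair is examined --- but you never reach this point because you aimed your detailed argument at the wrong inclusion.
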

\begin{proof}
First, for a path $\lambda_b \notin \Lambda^*$, there always exists a longer path $\lambda_a \in \Lambda^*$ under any $\mathcal{G}_u \in \mathcal{G}$. This is guaranteed by Lemmas~\ref{the:lambda_neg},~\ref{lem:identical} and~\ref{the:asynch}, which removes $\lambda_b$ from $\Lambda^*$ if it is lower than $\Delta(\tau)$, $\lambda_a$ or $\Delta(\tau_a)$, respectively.
Second, for any path $\lambda_a \in \Lambda^*$, there exists a graph $\mathcal{G}_u \in \mathcal{G}$ in which $\lambda_a$ is the longest. 
Assume that $\lambda_a \in \Lambda^*$ is not the longest in any graph, $\lambda_a$ is always dominated by a path $\lambda_b \in \Lambda^*$ or its alternative paths in $\tau_b$, and hence, will be removed based on the lemmas. 
Therefore, the theorem holds.
\end{proof}


The time complexity for computing $\Lambda^*$ is $\mathcal{O}(n^4)$. First, Alg.~\ref{algs:neg_path} has a $\mathcal{O}(n^2)$ complexity, 
which 
examines each $\theta^k_x$ of every $\theta_x \in \Theta$.
For Alg.~\ref{algs:filter}, at most $|\Lambda|^2$ iterations are performed to examine the paths, where each iteration can invoke Alg.~\ref{algs:neg_path} once.
Hence, the complexity of Alg.~\ref{algs:filter} is $\mathcal{O}(n^4)$. 
In addition, we note that a number of optimisations can be conducted to speed up the computations, \eg, $\lambda_a$ can be removed directly if $len(\lambda_a)\leq len(\lambda_b)$ at lines 5-6. However, such optimisations are omitted to ease the presentation. 
\section{Probabilistic Analysis of the Longest Paths}
\label{sec:prob}

This section computes the probability where a path $\lambda_h \in \Lambda^*$ is executed and is the longest, denoted as $P(\lambda_h)$.
Such a case can occur if (i) $\lambda_h$ is executed and (ii) all the longer paths in $\Lambda^*$ (say $\lambda_l$) are not executed.
The first part is calculated as $\prod_{\theta_x^k \in H(\lambda_h)}F(\theta_x^k)$, \ie, the probability of every branch $\theta_x^k$ in $H(\lambda_h)$ is executed.
However, it is challenging to obtain the probability of the second part,
in which a path $\lambda_l$ is not executed if any $\theta_x^k \in H(\lambda_l)$ is not being executed.
Hence, the computation of such a probability can become extremely complex when multiple long paths are considered, especially when these paths share certain $\theta_x^k$.
Considering the above, we develop an analysis that produces tight bounds on $P(\lambda_h)$ by leveraging the relationships between $P(\lambda_h)$ of all paths in $\Lambda^*$ (Sec.~\ref{sec:4a}). Then, we demonstrate that the pessimism would not significantly accumulate along with the computation of $P(\lambda_h)$ for every $\lambda_h \in \Lambda^*$, and prove the correctness of the constructed analysis (Sec.~\ref{sec:4c}). 

\subsection{Computation of $P(\lambda_h)$}\label{sec:4a}

The computation of $P(\lambda_h)$ is established based on the following relationship between $P(\lambda_h), \forall \lambda_h \in \Lambda^*$.
First, given that $\Lambda^*$ provides the exact set of the longest paths of $\tau$ (see Theorem~\ref{the:complete}), it follows that $\sum_{\lambda_h \in \Lambda^*} P(\lambda_h) = 1$. In addition, as only one path is the longest path in any $\mathcal{G}_u$, the probability of both $\lambda_a$ and $\lambda_b$ being executed as the longest in one graph is $P(\lambda_a \oplus \lambda_b) = 0$.

Based on the above, $P(\lambda_h)$ can be determined by the sum of probabilities of all other paths in $\Lambda^*$, as shown in Eq.~\ref{eqs:target}. The paths in $\Lambda^*$ are sorted in a non-increasing order by their lengths, in which a smaller index indicates a path with a higher length in general, \ie, $len(\lambda_l) \geq len(\lambda_h) \geq len(\lambda_s)$ with $1\leq l <h<s \leq |\Lambda^*|$.
\begin{equation}\label{eqs:target}
 P(\lambda_h) = 1 - \sum_{l=1}^{h-1}P(\lambda_l) - \sum_{s=h+1}^{|\Lambda^*|}P(\lambda_s) 
\end{equation}

Following Eq.~\ref{eqs:target}, Fig.~\ref{figs:eq} illustrates the computation process for $P(\lambda_h)$ of every $\lambda_h \in \Lambda^*$.
Starting from the first (\ie, longest) path in $\Lambda^*$, we obtain $P(\lambda_h)$ by determining the following two values.
\begin{enumerate}[label=(\roman*)]
\item $\sum_{l=1}^{h-1}P(\lambda_l)$: sum of $P(\lambda_l)$ with $1 \leq l < h$ (Fig.~\ref{figs:eq}\blackcircle{$a$}); 
\item $\sum_{s=h+1}^{|\Lambda^*|}P(\lambda_s)$: sum of $P(\lambda_s)$ with $h < s \leq |\Lambda^*|$ (Fig.~\ref{figs:eq}\blackcircle{$b$}).
\end{enumerate}


For $\sum_{l=1}^{h-1}P(\lambda_l)$, it can be obtained directly since the computation process always starts from $\lambda_1 \in \Lambda^*$ in order. Hence, when computing $P(\lambda_h)$, all the $P(\lambda_l)$ with $1 \leq l < h$ have already been calculated in the previous steps, as shown in Fig.~\ref{figs:eq}.
As for $\sum_{s=h+1}^{|\Lambda^*|}P(\lambda_s)$, it is not determined when $P(\lambda_h)$ is under computation, in which any $P(\lambda_s)$ with $h < s \leq |\Lambda^*|$ has not been examined yet. 

However, we note that $\sum_{s=h+1}^{|\Lambda^*|}P(\lambda_s)$ is also involved in the probability where $\lambda_h$ is not executed (denoted as $P(\overline{\lambda_h})$), as shown in Eq.~\ref{eqs:overline}. As illustrated in Fig.~\ref{figs:eq}, there are two cases in which $\lambda_h$ is not executed among all possible situations:
(i) a longer path $\lambda_l$ is executed as the longest path while $\lambda_h$ is not executed, denoted as $P(\lambda_l \oplus \overline{\lambda_h})$ in Fig.~\ref{figs:eq}\blackcircle{$a_2$} and
(ii) a shorter path $\lambda_s$ is executed as the longest path, where $\lambda_h$ must not be executed, \ie, $P(\lambda_s)$ in Fig.~\ref{figs:eq}\blackcircle{$b$}.
\begin{equation}\label{eqs:overline}
P(\overline{\lambda_h}) =  \sum_{l=1}^{h-1}P(\lambda_l \oplus \overline{\lambda_h}) + 
\sum_{s=h+1}^{|\Lambda^*|}P(\lambda_s) 
\end{equation}


\begin{figure}[tp!]
\centering
\includegraphics[width=\linewidth]{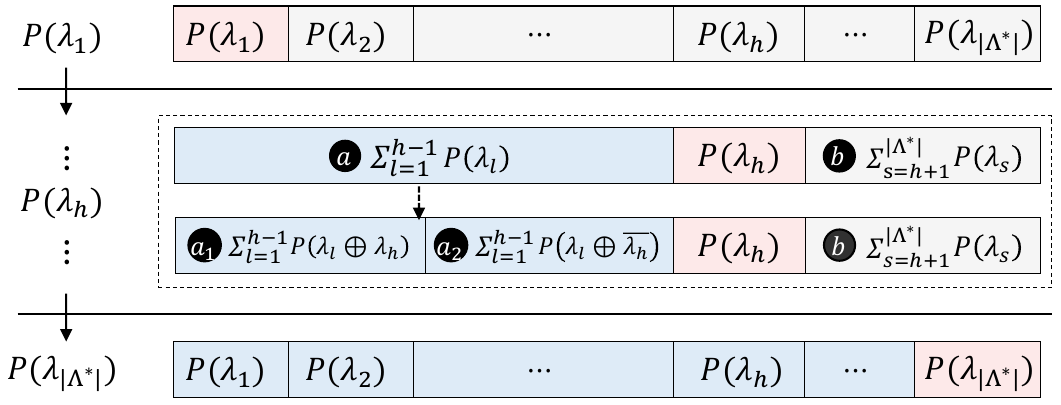} 
\caption{Computation of $P(\lambda_h)$ for every $\lambda_h \in \Lambda^*$ ordered in non-increasing path length \textit{(blue: probabilities that are examined; red: the probability under computation; grey: probabilities to be examined)}.}
\vspace{-10pt}
\label{figs:eq}
\end{figure}

In addition, given that the probability of $\lambda_h$ being executed is $\prod_{\theta_x^k \in H(\lambda_h)}F(\theta_x^k)$, the $P(\overline{\lambda_h})$ can also be computed by $P(\overline{\lambda_h}) = 1 - \prod_{\theta_x^k \in H(\lambda_h)}F(\theta_x^k)$. Accordingly, combining this with Eq.~\ref{eqs:overline}, we have $\sum_{l=1}^{h-1}P(\lambda_l \oplus \overline{\lambda_h}) + 
\sum_{s=h+1}^{|\Lambda^*|}P(\lambda_s) = 1 - \prod_{\theta_x^k \in H(\lambda_h)}F(\theta_x^k)$
Therefore, 
$\sum_{s=h+1}^{|\Lambda^*|}P(\lambda_s) $ can be computed as Eq.~\ref{eqs:shortp}.
\begin{equation}\label{eqs:shortp}
\sum_{s=h+1}^{|\Lambda^*|}P(\lambda_s)  = 1- \prod_{\theta_x^k \in H(\lambda_h)}F(\theta_x^k) - \sum_{l=1}^{h-1}P(\lambda_l \oplus \overline{\lambda_h})
\end{equation}

To this end, $\sum_{s=h+1}^{|\Lambda^*|}P(\lambda_s)$ can be obtained if $\sum_{l=1}^{h-1}P(\lambda_l \oplus \overline{\lambda_h})$ is computed.
However, it is difficult to compute 
$P(\lambda_l \oplus \overline{\lambda_h})$, which implies that all paths longer than $\lambda_l$ are not executed.
To bound $P(\lambda_l \oplus \overline{\lambda_h})$, we simplify the computation to only consider the case where $\lambda_l$ is executed while $\lambda_h$ is not, as shown in Eq.~\ref{eq:oplus}. As all branches in $H(\lambda_l)$ must be executed, $\theta_x^k \in H(\lambda_l) \cap  H(\lambda_h)$ are not included in the computation of $P(\overline{\lambda_h})$.
\begin{equation}\label{eq:oplus}
P(\lambda_l \oplus \overline{\lambda_h}) \leq
\prod_{\theta_x^k \in H(\lambda_l)} F(\theta_x^k)
\times 
\Big(1-\prod_{\theta_x^k \in H(\lambda_h) \setminus H(\lambda_l)} F(\theta_x^k)\Big) 
\end{equation}

To this end, $P(\lambda_l \oplus \overline{\lambda_h})$, $\sum_{s=h+1}^{|\Lambda^*|}P(\lambda_s)$, and eventually, the $P(\lambda_h)$ can be obtained by Eq.~\ref{eq:oplus},~\ref{eqs:shortp}, and~\ref{eqs:target}, respectively. The computation starts the longest path in $\Lambda^*$ and produces $P(\lambda_h)$ for every $\lambda_h \in \Lambda^*$ with $|\Lambda^*|$ iterations.
However, as fewer paths are considered when computing $P(\lambda_l \oplus \overline{\lambda_h})$, an upper bound is provided for $P(\lambda_l \oplus \overline{\lambda_h})$ according to the \textit{inclusion-exclusion principle}, instead of the exact value.
Hence, deviations can exist in $P(\lambda_h)$ as it depends on both $\sum_{l=l}^{h-1}P(\lambda_l)$ and $\sum_{s=h+1}^{|\Lambda^*|}P(\lambda_s)$. 
As a result, the probabilities of the long paths in $\Lambda^*$ could be overestimated, and subsequently, leading to
a lower probability for the shorter ones. 

Considering such deviations, the following constraints are applied for $P(\lambda_h)$ of every $\lambda_h \in \Lambda^*$: (i) $P(\lambda_h) \geq 0$ and (ii) $\sum_{l=1}^{h-1}  P(\lambda_l) + P(\lambda_h) \leq 1$.
First, we enforce that $P(\lambda_h) = 0$ if a negative value is obtained.
Second, if $\sum_{l=1}^{h-1}P(\lambda_l) + P(\lambda_h) > 1$ when computing $P(\lambda_h)$, the computation terminates directly with $P(\lambda_h) = 1 - \sum_{l=1}^{h-1}P(\lambda_l)$ and $P(\lambda_s) = 0,~h < s \leq |\Lambda^*|$.
Below we show that the deviations in $P(\lambda_h)$ would not significantly affect the following computations and prove the correctness of the analysis. 
\subsection{Discussions of Deviation and Correctness}\label{sec:4c}




We first demonstrate that the deviation of $P(\lambda_h)$ would not propagate along with the computation of every $\lambda_h \in \Lambda^*$.
As described in Sec.~\ref{sec:4a}, the deviations caused by Eq.~\ref{eq:oplus} can impact the value of $P(\lambda_h)$ from two aspects:
(i) the \textit{direct deviation} from $P(\lambda_l \oplus \overline{\lambda_h})$ in Eq.~\ref{eq:oplus} (denoted as $E_h^{dir}$); and 
(ii) the \textit{indirect deviation} from the deviations of $P(\lambda_l), 1 \leq l < h$ in Eq.~\ref{eqs:target} (denoted as $E_h^{ind}$). 
First, $E_h^{dir}$ is not caused by deviations of any $P(\lambda_l), 1 \leq l < h$ (see Eq.~\ref{eq:oplus}). 
Below we focus on $E_h^{ind}$ to illustrate the propagation of deviations.

Let $E_h$ denote the deviation of $\lambda_h$, $E_h$ and $E_h^{ind}$ is computed by Eq.~\ref{eq:e} and~\ref{eqs:error_pre}, respectively. For $E_h^{ind}$, it is caused by the deviations of $P(\lambda_l)$ with $1 \leq l < h$, as shown in Eq.~\ref{eqs:target}.
\begin{equation} \label{eq:e}
E_h = E_h^{{dir}} - E_h^{{ind}}
\end{equation}
\begin{equation}\label{eqs:error_pre}
E_h^{ind}  = \sum_{l=1}^{h-1} E_l
\end{equation}

Based on Eq~\ref{eq:e} and~\ref{eqs:error_pre}, $E_h^{ind}$ can be obtained using both equations recursively, as shown in Eq.~\ref{eqs:tmp}.
First, $E_h^{ind}$ is computed as $\sum_{l=1}^{h-1} E_l^{dir} - \sum_{l=1}^{h-1} E_l^{ind}$ based on Eq.~\ref{eq:e} and~\ref{eqs:error_pre}, in which $\sum_{l=1}^{h-1} E_l^{ind}$ is equivalent to $\sum_{l=1}^{h-2} E_l^{ind} + E_{h-1}^{ind}$. 
Then, $E_{h-1}^{ind}$ can be obtained using Eq.~\ref{eqs:error_pre}. 
Finally, $E_h^{ind}$ is computed as $E_{h-1}^{dir}$.
\begin{equation}\label{eqs:tmp}
\begin{split}
E_h^{ind} & = \sum_{l=1}^{h-1} E_l^{dir} - \sum_{l=1}^{h-1} E_l^{ind} \\
& =\sum_{l=1}^{h-1} E_l^{dir} -  (\sum_{l=1}^{h-2} E_l^{ind} + E_{h-1}^{ind}) \\
& =\sum_{l=1}^{h-1} E_l^{dir} -\Big(\sum_{l=1}^{h-2} E_l^{ind} + (\sum_{l=1}^{h-2} E_l^{dir} -\sum_{l=1}^{h-2} E_l^{ind})\Big)\\
& = \sum_{l=1}^{h-1} E_l^{dir} - \sum_{l=1}^{h-2} E_l^{dir}=E_{h-1}^{dir} 
\end{split}
\end{equation}






With $E_h^{dir}$ obtained, $E_h = E_h^{dir} - E_h^{ind} = E_h^{dir} -E_{h-1}^{dir}$ based on Eq.~\ref{eq:e}. 
From the computations, it can be observed that $P(\lambda_h)$ is only affected by the deviationtation of $P(\lambda_h)$ itself and $P(\lambda_{h-1})$. This dsemonstrate the analysis effectively manages the pessimventing the propagation of deviations. 
In addition, such deviations would not undermine the correctness of the analysis.
Let $\mathbb{P}(\lambda_h) = \sum_{1\leq l <h} P(\lambda_l) + P(\lambda_h)$ denote the probability where the length of the longest path being executed is equal to or higher than $len(\lambda_h)$. Theorem~\ref{the:correct} describes the correctness of the proposed analysis.

\begin{theorem}\label{the:correct}
Let $\mathbb{P}^\S(\lambda_h)$ denote the exact probability of $\mathbb{P}(\lambda_h)$, it follows that $\mathbb{P}( \lambda_h) \geq \mathbb{P}^\S( \lambda_h), \forall \lambda_h \in \Lambda^*$.
\end{theorem}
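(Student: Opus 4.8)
The plan is to show that the computed cumulative quantity $\mathbb{P}(\lambda_h)$ never underestimates the true cumulative probability $\mathbb{P}^\S(\lambda_h)$, proceeding by induction on $h$ (paths ordered by non-increasing length). The base case $h=1$ is the key anchor: here $P(\lambda_1)$ is computed from Eq.~\ref{eqs:target} with the first sum empty, so $\mathbb{P}(\lambda_1) = P(\lambda_1) = 1 - \sum_{s=2}^{|\Lambda^*|} P(\lambda_s)$, where $\sum_{s=2}^{|\Lambda^*|} P(\lambda_s)$ comes from Eq.~\ref{eqs:shortp} with no $\oplus$ terms, giving exactly $\prod_{\theta_x^k \in H(\lambda_1)} F(\theta_x^k)$. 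Since $\lambda_1$ is a longest path in every graph in which it executes, this product is precisely $\mathbb{P}^\S(\lambda_1)$, so the base case holds with equality.

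For the inductive step, I would use the deviation bookkeeping established in Sec.~\ref{sec:4c}: with $E_h := \mathbb{P}^\S(\lambda_h) - \mathbb{P}(\lambda_h)$ (or its per-path analogue), the excerpt has already shown $E_h = E_h^{dir} - E_{h-1}^{dir}$, so the cumulative deviation telescopes and $\mathbb{P}^\S(\lambda_h) - \mathbb{P}(\lambda_h) = \sum_{l=1}^{h} E_l = -E_h^{dir}$ (with appropriate sign conventions from Eq.~\ref{eq:e}). The crux is then to argue that $E_h^{dir} \geq 0$, i.e.\ the direct deviation introduced by Eq.~\ref{eq:oplus} is always an \emph{over}estimate of $P(\lambda_l \oplus \overline{\lambda_h})$ summed over $l$, hence the computed $\sum_{s>h} P(\lambda_s)$ is an \emph{under}estimate (Eq.~\ref{eqs:shortp}), hence $P(\lambda_h)$ via Eq.~\ref{eqs:target} is an overestimate of the current path probability, which pushes cumulative mass toward the longer paths --- exactly the direction that makes $\mathbb{P}(\lambda_h)$ an upper bound. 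The overestimation of $P(\lambda_l \oplus \overline{\lambda_h})$ by the right side of Eq.~\ref{eq:oplus} follows from the inclusion--exclusion principle as noted in the text: dropping the constraint ``all paths longer than $\lambda_l$ are not executed'' can only enlarge the event, and the factored form $\prod_{H(\lambda_l)} F \cdot (1 - \prod_{H(\lambda_h)\setminus H(\lambda_l)} F)$ is itself an upper bound on $P(\lambda_l$ executed $\wedge\ \lambda_h$ not executed$)$ because the two factored events are treated as independent when in fact the shared branches create positive correlation.

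I would also need to handle the two clamping rules separately: when $P(\lambda_h)$ is forced to $0$ (a negative value was obtained), the computed $\mathbb{P}(\lambda_h) = \mathbb{P}(\lambda_{h-1})$, and since the true $\mathbb{P}^\S(\lambda_h) \leq \mathbb{P}^\S(\lambda_{h-1}) \leq \mathbb{P}(\lambda_{h-1})$ by the inductive hypothesis and monotonicity, the bound survives; when $\sum_{l<h} P(\lambda_l) + P(\lambda_h) > 1$ triggers early termination with $\mathbb{P}(\lambda_h) = 1$, the bound is trivial since any probability is at most $1$. So both clamps are consistent with the claimed direction.

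The main obstacle I anticipate is making the sign/direction argument fully rigorous rather than heuristic --- specifically, proving $E_h^{dir} \geq 0$ in a way that genuinely invokes inclusion--exclusion on the lattice of branch-execution events, since $P(\lambda_l \oplus \overline{\lambda_h})$ involves the negated condition on \emph{all} longer paths, and one must be careful that \emph{summing} the per-$l$ upper bounds of Eq.~\ref{eq:oplus} still yields an upper bound on $\sum_l P(\lambda_l \oplus \overline{\lambda_h})$ (it does, trivially, since each term is individually bounded, but one must confirm the events $\{\lambda_l \oplus \overline{\lambda_h}\}$ for distinct $l$ are disjoint so the sum is itself a probability $\leq P(\overline{\lambda_h})$ and no double-counting inflates things further). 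A secondary subtlety is confirming the telescoping identity of Eq.~\ref{eqs:tmp} still holds verbatim once the clamps are active, which may require treating clamped steps as ``$E_h^{dir}$ reset'' cases; I would fold this into the induction rather than appeal to Eq.~\ref{eqs:tmp} as a black box.
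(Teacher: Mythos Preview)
Your core argument is exactly the paper's: use the deviation bookkeeping of Sec.~\ref{sec:4c} to telescope the per-path errors and obtain $\mathbb{P}(\lambda_h)-\mathbb{P}^\S(\lambda_h)=E_h^{dir}$, then invoke $E_h^{dir}\geq 0$ from the fact that Eq.~\ref{eq:oplus} is an upper bound. The paper's proof is just those two steps, stated directly without the induction wrapper, the explicit base case, or any treatment of the clamps.

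There is one concrete error in your handling of the first clamp. The monotonicity runs the other way: paths are sorted by non-increasing length, so as $h$ grows the threshold $len(\lambda_h)$ drops and $\mathbb{P}^\S(\lambda_h)$ is \emph{non-decreasing} in $h$; hence $\mathbb{P}^\S(\lambda_h)\leq\mathbb{P}^\S(\lambda_{h-1})$ is false in general and your chain $\mathbb{P}^\S(\lambda_h)\leq\mathbb{P}^\S(\lambda_{h-1})\leq\mathbb{P}(\lambda_{h-1})$ breaks. The repair is simpler than the argument you attempted: substituting Eq.~\ref{eqs:shortp} into Eq.~\ref{eqs:target} and summing gives the unclamped cumulative directly as
\[
\mathbb{P}(\lambda_h)=\prod_{\theta_x^k\in H(\lambda_h)}F(\theta_x^k)+\sum_{l=1}^{h-1}\bigl[\text{RHS of Eq.~\ref{eq:oplus}}\bigr],
\]
which depends only on the branch probabilities and not on any previously computed $P(\lambda_l)$. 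Thus earlier clamps have no effect on $\mathbb{P}(\lambda_h)$, and replacing a negative $P(\lambda_h)$ by $0$ can only raise this cumulative further, preserving $\mathbb{P}(\lambda_h)\geq\mathbb{P}^\S(\lambda_h)$. This also dissolves your secondary worry about whether the telescoping identity of Eq.~\ref{eqs:tmp} survives a clamp.
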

\begin{proof}
Based on Eq.~\ref{eq:e} and~\ref{eqs:tmp}, $P(\lambda_h) = P^\S(\lambda_h) + E_h^{dir} - E_{h-1}^{dir}$ with derivations considered.
In addition, given that $\mathbb{P}( \lambda_h) = \sum_{l=1}^{h} P(\lambda_l)$, in which $P(\lambda_l)$ is further computed as $P^\S(\lambda_l) + E_l^{dir} - E_{l-1}^{dir}$. Therefore, $\mathbb{P}( \lambda_h)$ and $\mathbb{P}^\S( \lambda_h)$ follows that $\mathbb{P}( \lambda_h) = \mathbb{P}^\S( \lambda_h) + E_{h}^{dir}$. As $E_h^{dir}$ is non-negative, $\mathbb{P}( \lambda_h) \geq \mathbb{P}^\S( \lambda_h)$ holds for all $\lambda_h \in \Lambda^*$.
\end{proof}





\begin{figure*}[t!]
\vspace{-10pt}
\centering
\subfigbottomskip=1pt 
\subfigure[$p=6$, $|\Theta|=3$ and varied $psr$.]{\label{figs:exp1a}
\includegraphics[width=.26\linewidth]{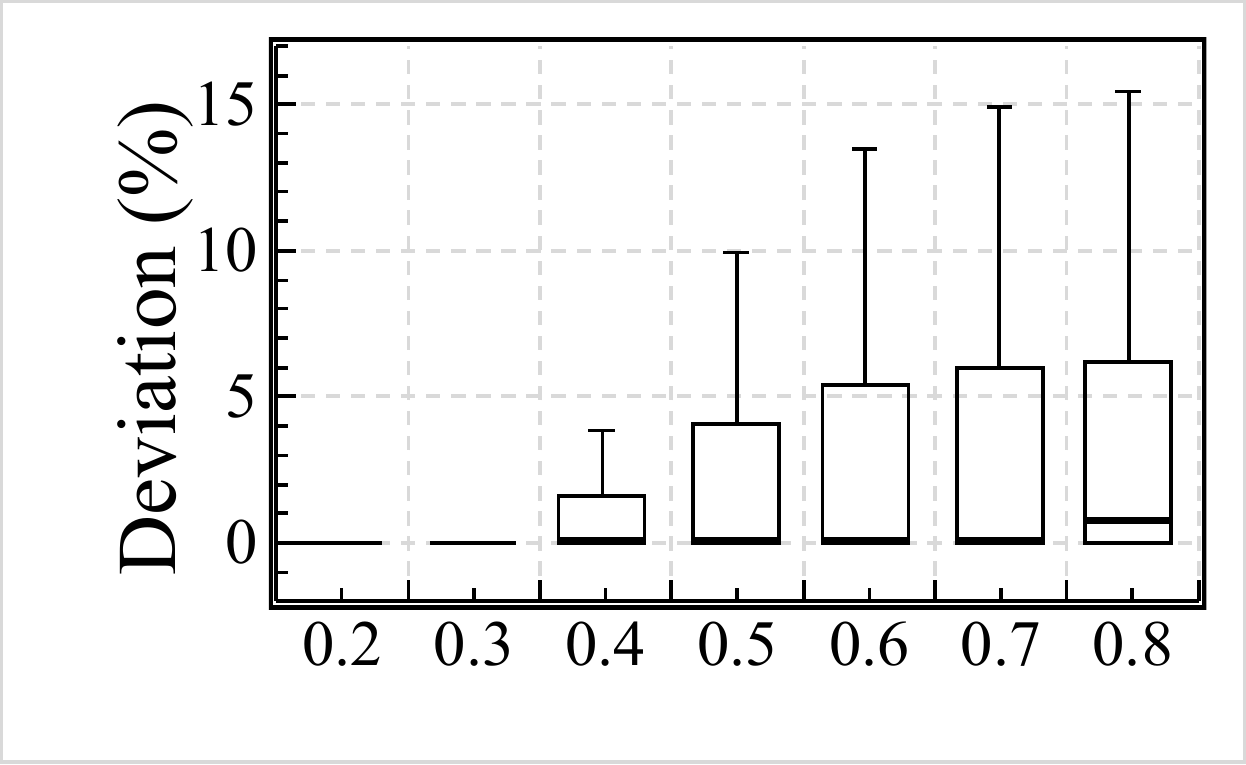}}\hspace{10pt}
\subfigure[$psr=0.4$, $|\Theta|=3$ and varied $p$.]{\label{fig:exp1b} 
\includegraphics[width=.26\linewidth]{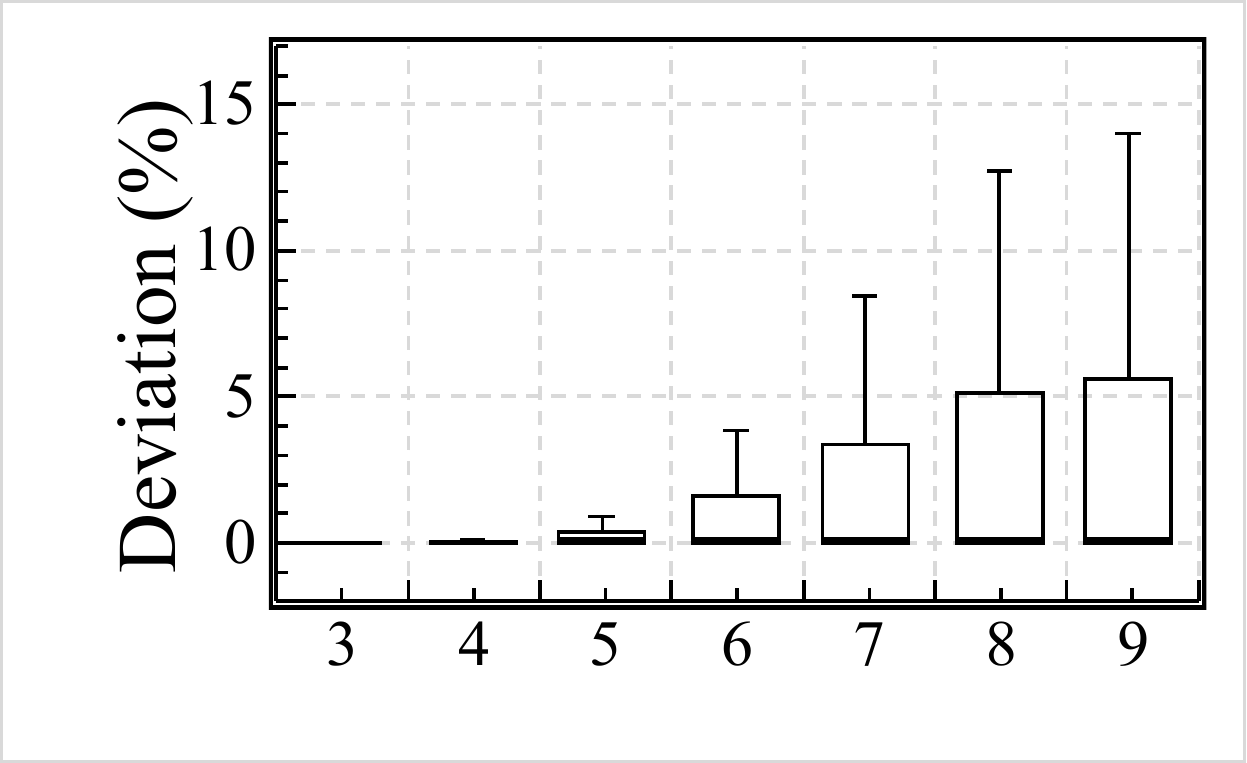}}\hspace{10pt}
\subfigure[$psr=0.4$, $p=6$ and varied $|\Theta|$.]{\label{fig:exp1c} 
\includegraphics[width=.26\linewidth]{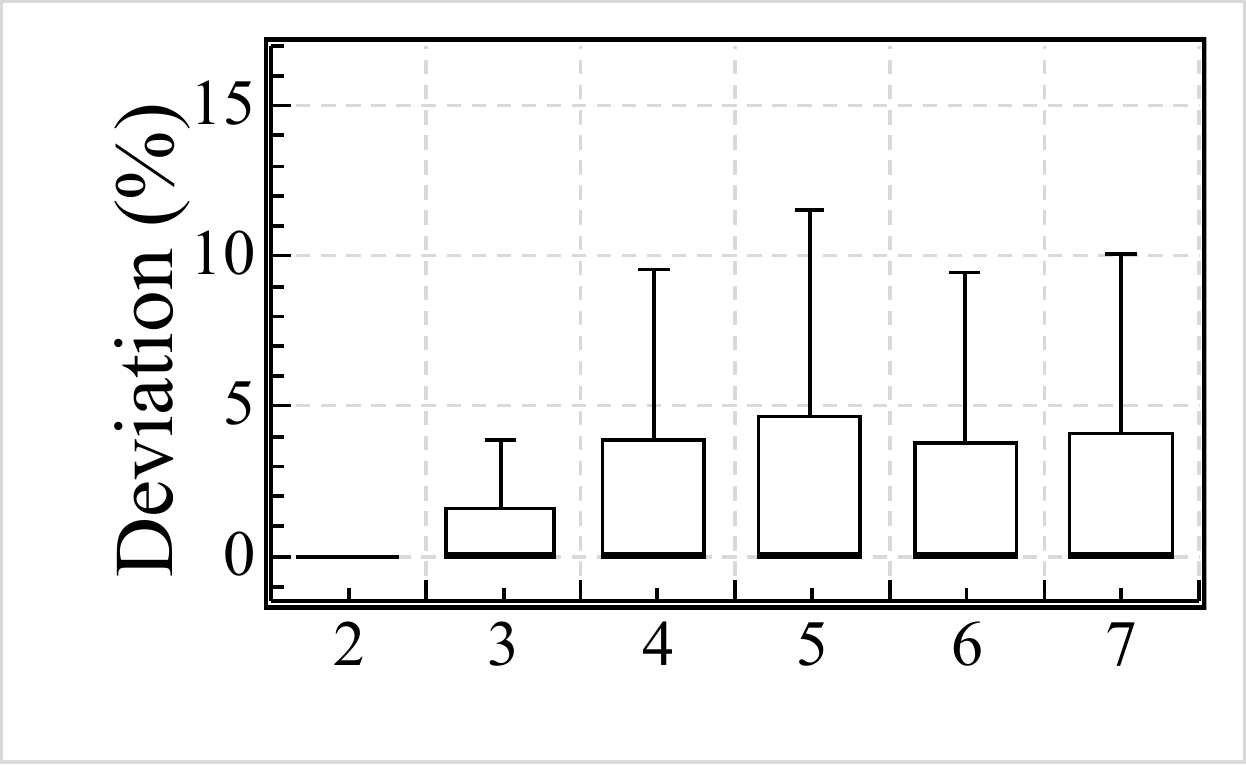}}
\caption{The deviation in percentage between the proposed analysis and Ueter2021 under varied $psr$, $p$ and $|\Theta|$. }
\label{figs:exp_one}
\vspace{-5pt}
\end{figure*}



\vspace{-4pt}
\section{Construction of Probabilistic Timing Distribution}\label{sec:interference}


With $P(\lambda_h)$ obtained for every $\lambda_h \in \Lambda^*$, this section constructs the complete probabilistic response time distribution of a \pdag{}.
When $\lambda_h$ is executed as the longest path, the worst-case interfering workload (\ie, denoted as $I_h$) is computed by Eq.~\ref{eqs:interference} in three folds: (i) the interference from the non-conditional nodes that are not in $\lambda_h$ (\ie, $v_i \notin \lambda_h  \wedge  v_i \notin \Theta$), (ii) the interference from nodes in $H(\lambda_h)$ that are not in $\lambda_h$ (\ie, $v_i \notin  \lambda_h \wedge v_i \in \bigcup_{\theta_x^k \in H(\lambda_h)}$), and (iii) the worst-case interference from nodes in probabilistic structures except $S(\lambda_h)$ (\ie, $\theta_x \in \Theta \setminus S(\lambda_h)$), where the branch with the maximum workload is taken into account.
\begin{equation}\label{eqs:interference}
\begin{split}
I_h  = & \sum_{v_i \notin \lambda_h  \wedge  v_i \notin \Theta} C_i + 
\sum_{ v_i \notin  \lambda_h \wedge v_i \in \bigcup_{\theta_x^k \in H(\lambda_h)}} C_i \\
+ & \sum_{\theta_x \in \Theta \setminus S(\lambda_h)} \mathop{max}\{ vol(\theta_x^k) ~|~ \theta_x^k \in \theta_x \} 
\end{split}
\end{equation}



With $I_h$, the worst-case response time $R_h$ when $\lambda_h$ is the longest path can be computed by $R_h \leq len(\lambda_h) + \frac{1}{m} \times I_h$, where $m$ denotes the number of cores. Hence, combing with $P(\lambda_h)$, the probabilistic response time distribution of $\tau$ can be obtained based on $\mathbb{P}(R_h) = \sum_{1\leq l <h} P(\lambda_l) + P(\lambda_h)$, in which $\mathbb{P}(R_h)$ indicates the probability where the response time of $\tau$ is equal to or higher than $R_h$.
Theorem~\ref{the:rta_correct} justifies the correctness of the constructed analysis for \pdag{}s. 

\begin{theorem}\label{the:rta_correct}
Let $R^\S_h$ and $\mathbb{P}^\S(R_h)$ denote the exact values of $R_h$ and $\mathbb{P}(R_h)$, it follows $R_h \geq  R^\S_h \wedge \mathbb{P}(R_h) \geq \mathbb{P}^\S(R_h), \forall \lambda_h \in \Lambda^*$.
\end{theorem}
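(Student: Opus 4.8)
The plan is to establish Theorem~\ref{the:rta_correct} in two parts, handling the bound on $R_h$ first and then the bound on the cumulative probability $\mathbb{P}(R_h)$, reusing Theorem~\ref{the:correct} for the latter. For the first part, I would fix an arbitrary $\lambda_h \in \Lambda^*$ and argue that the workload $I_h$ computed by Eq.~\ref{eqs:interference} is an upper bound on the true interfering workload in any graph $\mathcal{G}_u \in \mathcal{G}$ in which $\lambda_h$ is the longest path. The key observation is that when $\lambda_h$ is the longest path of $\mathcal{G}_u$, the set of nodes executed in $\mathcal{G}_u$ but not on $\lambda_h$ is contained in the union of (i) the non-conditional nodes outside $\lambda_h$, (ii) the nodes of $H(\lambda_h)$ outside $\lambda_h$, and (iii) for each probabilistic structure $\theta_x \in \Theta \setminus S(\lambda_h)$, the nodes of whichever branch is taken — which is bounded above by the branch of maximum volume. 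Summing these gives $vol(\mathcal{G}_u) - len(\lambda_h) \le I_h$. Then Graham's bound applied to $\mathcal{G}_u$ yields a response time of at most $len(\mathcal{G}_u) + \frac{1}{m}(vol(\mathcal{G}_u) - len(\mathcal{G}_u))$; since $len(\mathcal{G}_u) = len(\lambda_h)$ and $vol(\mathcal{G}_u) - len(\lambda_h) \le I_h$, this is at most $len(\lambda_h) + \frac{1}{m} I_h = R_h$. Taking the maximum over all such $\mathcal{G}_u$ shows $R_h \ge R_h^\S$.

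For the second part, I would first note that, by construction, the paths in $\Lambda^*$ are sorted in non-increasing length, and $R_h$ is a nondecreasing-in-the-right-direction function of the ranking: because $len(\lambda_l) \ge len(\lambda_h)$ for $l < h$ and the interference term is bounded, one needs $R_l$ to be consistent with the ordering so that the event ``response time $\ge R_h$'' corresponds exactly to ``the longest executed path has index $\le h$'' (modulo ties, where the inequality only helps). Granting this correspondence, $\mathbb{P}^\S(R_h) = \mathbb{P}^\S(\lambda_h) = \sum_{l=1}^{h} P^\S(\lambda_l)$, and the definition $\mathbb{P}(R_h) = \sum_{1 \le l < h} P(\lambda_l) + P(\lambda_h) = \mathbb{P}(\lambda_h)$ coincides with the quantity bounded in Theorem~\ref{the:correct}. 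Invoking Theorem~\ref{the:correct}, $\mathbb{P}(\lambda_h) \ge \mathbb{P}^\S(\lambda_h)$, hence $\mathbb{P}(R_h) \ge \mathbb{P}^\S(R_h)$ for all $\lambda_h \in \Lambda^*$.

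I expect the main obstacle to be part one: carefully justifying that $vol(\mathcal{G}_u) - len(\lambda_h) \le I_h$ without double-counting nodes and while correctly handling nodes that lie on $\lambda_h$ but also belong to a probabilistic branch — the three sums in Eq.~\ref{eqs:interference} must partition exactly the non-$\lambda_h$ workload, and the ``$v_i \notin \lambda_h$'' qualifiers in cases (i) and (ii) together with the branch-maximum in case (iii) need to be shown to cover every node of $\mathcal{G}_u \setminus \lambda_h$ and never less. A secondary subtlety is the tie-breaking in the probability-to-response-time correspondence: if two paths of equal length have different interference, the mapping from ``longest path index'' to ``response time threshold'' is only monotone, not bijective, so I would phrase the argument in terms of $\mathbb{P}(R_h) \ge \mathbb{P}(\lambda_h) \ge \mathbb{P}^\S(\lambda_h) \ge \mathbb{P}^\S(R_h)$, using that a longer-or-equal executed path can only increase the response time, which keeps the inequality in the safe direction. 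The rest is a direct application of Graham's bound and Theorem~\ref{the:correct}, so no further machinery is needed.
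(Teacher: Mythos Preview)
Your proposal is correct and follows essentially the same approach as the paper: bound the interference by observing that only the contribution from $\theta_x \in \Theta \setminus S(\lambda_h)$ is uncertain and is upper-bounded by the maximum-volume branch (the third sum of Eq.~\ref{eqs:interference}), then invoke Theorem~\ref{the:correct} via $\mathbb{P}(R_h)=\mathbb{P}(\lambda_h)\ge\mathbb{P}^\S(\lambda_h)=\mathbb{P}^\S(R_h)$. Your treatment of the workload partition and the monotonicity/tie-breaking issue is more explicit than the paper's terse proof, but the underlying argument is identical.
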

\begin{proof}
We first prove that $R_h \geq  R^\S_h$ for a given $\lambda_h \in \Lambda^*$. When $\lambda_h$ is executed as the longest path, the only uncertainty is the interfering workload from $\theta_x \in \Theta \setminus S(\lambda_h)$ (\ie, the third part in Eq.~\ref{eqs:interference}). 
Based on Eq.~\ref{eqs:interference}, $I_h$ is computed based on the maximum workload of each $ \theta_x \in \Theta \setminus S(\lambda_h)$. As only one $\theta_x^k \in \theta_x$ is executed, this bounds the worst-case interfering workload of all possible scenarios; and hence, $R_h \geq  R^\S_h$ follows.
As for $\mathbb{P}(R_h) \geq \mathbb{P}^\S(R_h)$ for a given $R_h$, this is proved in Theorem~\ref{the:correct} where $\mathbb{P}(R_h) = \mathbb{P}(\lambda_h) \geq \mathbb{P}^\S(\lambda_h) = \mathbb{P}^\S(R_h)$.
Therefore, this theorem holds for all $\lambda_h \in \Lambda^*$.
\end{proof}

This concludes the constructed timing analysis for a \pdag{}. 
By exploiting the set of the longest paths in $\tau$ (\ie, $\Lambda^*$), the analysis produces the probabilistic response time distribution of $\tau$ (\eg, the one in Fig.~\ref{fig:1b}) without the need for enumerating through every $\mathcal{G}_u \in \mathcal{G}$. With a specified time limit (such as the failure rate defined by ISO-26262), the analysis can provide the probability where the system misses its deadline, offering valuable insights that effectively empower optimised system design solutions, \eg, the improved resource efficiency shown in Tab.~\ref{tab:require_cores} below.

\vspace{-1pt}
\section{Evaluation}\label{sec:exp}


This section evaluates the proposed analysis for \pdag{}s against the existing approaches~\cite{ueter2021response,graham1969bounds} in terms of deviations of analytical results (Sec.~\ref{sec:pessimism}), computation cost (Sec.~\ref{sec:cost}) and the resulting resource efficiency of systems with \pdag{}s (Sec.~\ref{sec:core}). 

\vspace{-1pt}
\subsection{Experimental Setup} \label{sec:results-setup}
The experiments are conducted on randomly generated \pdag{}s with $m=4$.
The generation of a \pdag{} starts by constructing the DAG structure. As with~\cite{zhao2020dag,zhao2022dag,he2019intra},
the number of layers is randomly chosen in $[ 5,8 ]$ and the number of nodes in each layer is decided in $[2,p]$ (with $p = 6$ by default). 
Each node has a 20\% likelihood of being connected to a node in the previous layer.
As with~\cite{jiang2024cache}, the period $T$ is randomly generated in $[1, 1400]$ units of time with $D=T$. The workload is calculated by $T \times 50\%$, given a total utilisation of $50\%$. The WCET of each node is uniformly generated based on the workload.
Then, a number of probabilistic structures (\ie, $|\Theta|$) are generated by replacing nodes in the generated DAG ($ |\Theta| = 3$ by default).
Each $\theta_x \in \Theta$ contains three probabilistic branches. A $\theta_x^k \in \theta_x$ is a non-conditional sub-graph generated using the same approach, with the number of layers and nodes in each layer determined in $[2,4]$.
A parameter \textit{probabilistic structure ratio} ($psr$) is used to control the volume of the probabilistic structures in $\tau$, \eg, $psr=0.4$ 
means the workload of probabilistic structures is 40\% of the total workload of $\tau$. 
The $F(\theta_x^k) \in [0,1]$ of each $\theta_x^k$ is assigned with a randomly probability, with $\sum_{\theta_x^k \in \theta_x} F(\theta_x^k)=1$ enforced for all $\theta_x^k$ in every $\theta_x$.
For each system setting, 500 \pdag{}s are generated to evaluate the competing methods.

The \textit{Non-Overlapping Area Ratio} (\metric)~\cite{ye2015cumulative} is applied to compare the probabilistic distributions produced by the proposed and enumeration-based~\cite{ueter2021response} (denoted as Ueter2021) analysis. It is computed as the non-overlapping area between the two distributions divided by the area of the distribution from Ueter2021. The area of a probabilistic distribution is quantified as the space enclosed by its distribution curve and the x-axis from the lowest to the highest response time. The non-overlapping area of two distributions is the space covered exclusively by either distribution. A lower value of \metric indicates a smaller deviation between the two distributions.

\begin{figure}[t]
\vspace{-12pt}
\centering
\includegraphics[width=.7\linewidth]{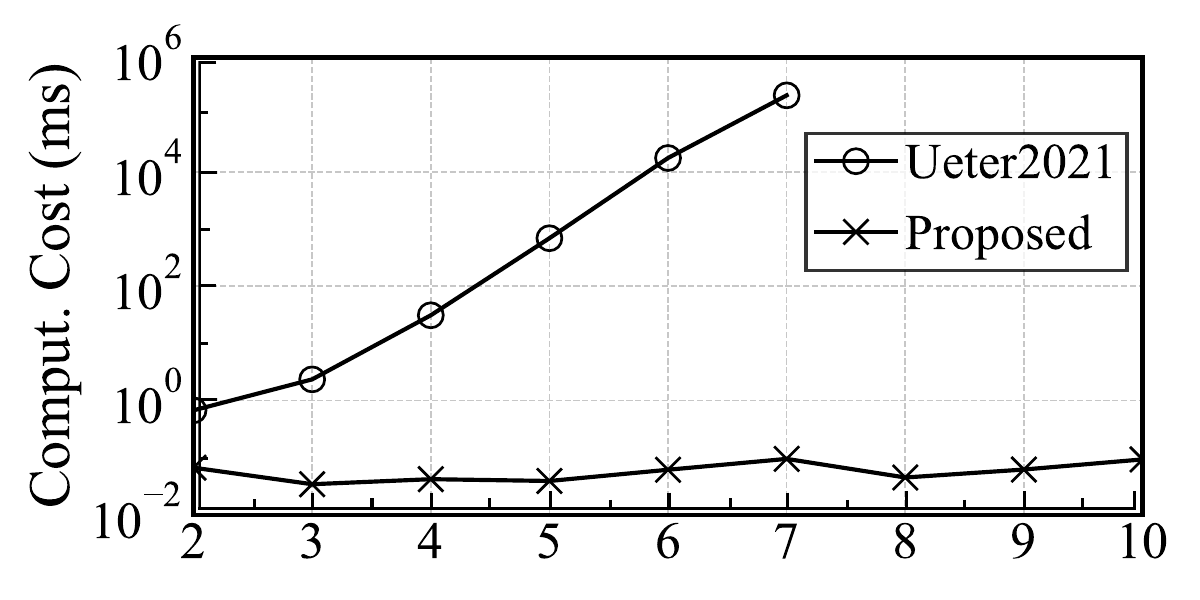}
\caption{Comparison of the computation cost under varied  $|\Theta|$.}
\label{figs:exp2}
\vspace{-10pt}
\end{figure}

\subsection{Deviations between the Analysis for \pdag{}s} \label{sec:pessimism}

This section compares the deviations between the proposed analysis and Ueter2021 in terms of \metric, as shown in
Fig.~\ref{figs:exp1a} to~\ref{fig:exp1c} with \pdag{}s generated under varied $psr$, $p$ and $|\Theta|$, respectively.

\textbf{Obs 1.}
The proposed method achieves an average deviation of only $1.04\%$ compared to Ueter2021, and remains below $5\%$ in most cases.

This observation is obtained from Fig.~\ref{figs:exp_one}, in which the deviation between our analysis and Ueter2021 is $1.45\%$, $0.73\%$ and $0.71\%$ on average with varied $psr$, $p$ and $|\Theta|$, respectively.
Notably, our method shows negligible deviations ($0.23\%$ on average) for \pdag{}s with a relatively simple structure, \eg, $psr \leq 0.4$, $p \leq 6$ or $|\Theta| \leq 3$.
As the structure of \pdag{}s becomes more complex, a slight increase in the deviation is observed between the two analysis, \eg, $3.01\%$ and $1.81\%$ on average when $psr=0.7$ in Fig.~\ref{figs:exp1a} and $p=8$ in Fig.~\ref{fig:exp1b}, respectively.
The reason is that for large and complex \pdag{}s, deviations can exist in multiple $P(\lambda_h)$ values (see Eq.~\ref{eq:oplus}), leading to an increased \metric between two analysis.
However, as observed, the deviations are below $5\%$ for most cases across all experiments, which justifies the effectiveness of the proposed analysis.

\subsection{Comparison of the Computation Costs}\label{sec:cost}
Fig.~\ref{figs:exp2} shows the computation cost (in milliseconds) of the proposed analysis and Ueter2021 under \pdag{}s generated under varied $|\Theta|$. The results are measured on a desktop with an Intel i5-13400 processor running at a frequency of 2.50GHz and a memory of 24GB.

\textbf{Obs 2.}
The computation cost of the proposed analysis is reduced by six orders of magnitude on average compared to Ueter2021.

As shown in the figure, the computation cost of Ueter2021 grows exponentially as $|\Theta|$ increases, due to the recursive enumeration of every execution scenario of a \pdag{}~\cite{ueter2021response}. 
In particular, this analysis fails to provide any results when $|\Theta| > 7$, which encounters an out-of-memory error on the experimental machine. 
In contrast, by eliminating the need for enumeration, our analysis achieves a significantly lower computation cost (under 10 milliseconds in most cases) across all $|\Theta|$ values and scales effectively to \pdag{}s with $|\Theta| >7$. 
Combining Obs.1 (Sec.~\ref{sec:pessimism}), the proposed analysis maintains a low deviation for relatively small \pdag{}s while effectively scaling to large ones, providing an efficient solution for analysing \pdag{}s. 


\begin{table}[t]
\centering
\vspace{-3pt}
\caption{Comparison of the average number of cores required under different acceptance ratios and varied $|\Theta|$.}
\label{tab:require_cores}
\resizebox{\columnwidth}{!}{
\begin{tabular}{l | c c c c c c c}
\toprule
\multicolumn{1}{c|}{\textbf{$|\Theta|$}} & 3 & 4 & 5 & 6 & 7  & 8  & 9\\ 
\midrule 
Ueter2021-70\%  & 8.66 & 8.18 & 8.07 & 7.88 & 7.76 & - & - \\
Proposed-70\%  & 8.66 & 8.19 & 8.09 & 7.89 & 7.78 & 7.20 & 7.07\\ 
\midrule

Ueter2021-80\%  & 12.78 & 11.31 & 10.67 & 9.93 & 9.94 & - & - \\
Proposed-80\%  & 12.78 & 11.31 & 10.67 & 9.93 & 9.94 & 9.05 & 8.30\\ 
\midrule

Ueter2021-90\%  & 12.78 & 11.32 & 10.70 & 10.01 & 10.11 & - & - \\
Proposed-90\%  & 12.78 & 11.32 & 10.70 & 10.01 & 10.11 & 9.33 & 8.63\\ 
\midrule

Ueter2021-100\%  & 13.67 & 12.38 & 12.07 & 11.96 & 12.22 & - & - \\
Proposed-100\%  & 13.67 & 12.38 & 12.07 & 11.96 & 12.22 & 12.16 & 12.10\\ 
Graham-100\%  & 13.67 & 12.38 & 12.07 & 11.96 & 12.22 & 12.16 & 12.10\\
\bottomrule
\end{tabular}}
\vspace{-8pt}
\end{table}

\vspace{-1pt}
\subsection{Impact on System Design Solutions}\label{sec:core}
\vspace{-1pt}
This section compares the resource efficiency of design solutions produced by the proposed and existing analysis. 
Tab.~\ref{tab:require_cores} shows the average number of cores required to achieve a given acceptance ratio, decided by the competing methods, \eg, ``Proposed-80\%" means that the proposed analysis is applied with an acceptance ratio of 80\%.

\textbf{Obs 3.} The proposed analysis effectively enhances the resource efficiency of systems, especially for ones with large \pdag{}s.



As shown in the table, both probabilistic analysis outperform Graham's bound in a general case by leveraging the probabilistic timing behaviours of \pdag{}s.
For instance, with the acceptance ratio of 70\%, our analysis reduces the number of cores by $36.33\%$ compared to Graham's bound when $|\Theta|=7$.
More importantly, negligible differences are observed between our analysis and Ueter2021 for $|\Theta|\leq7$, whereas our analysis remains effective as $|\Theta|$ continues to increase. 
This justifies the effectiveness of the constructed analysis and its benefits in improving design solutions by exploiting probabilistic timing behaviours of \pdag{}s. 
In addition, we observe that fewer cores are needed as $|\Theta|$ increases. This is expected as the construction of the probabilistic structures would increase task parallelism without changing the workload, leading to \pdag{}s that are more likely to be schedulable within a given limit (see Sec.~\ref{sec:results-setup}).






\vspace{-1pt}
\section{Conclusion}\label{sec:conclusion}

This paper presents a probabilistic response time analysis for a \pdag{} by exploiting its longest paths. We first identify the longest path for each execution scenario of the \pdag{} and calculate the probability of its occurrence.
Then, the worst-case interfering workload of each longest path is computed to produce the complete probabilistic response time distribution. 
Experiments show that compared to existing approaches, our analysis significantly enhances the scalability by reducing computation cost while maintaining low deviation, facilitating the scheduling of large \pdag{}s with improved resource efficiency. 
The constructed analysis provides an effective analytical solution for systems with \pdag{}s, empowering optimised system design that fully leverages the probabilistic timing behaviours.


\balance
\bibliographystyle{IEEEtran}
\bibliography{ref}

\end{document}